\newif\ifusenixsty
\newif\ifcameraready
\newif\ifremovecomments
\newif\ifmarkchanges
\tikzset{
	state/.style={
		rectangle,
		rounded corners,
		draw=black, thick,
		minimum height=2em,
		inner sep=2pt,
		text centered,
	},
}
\def\UrlBreaks{\do\/\do-} 
\newcolumntype{L}[1]{>{\raggedright\let\newline\\\arraybackslash\hspace{0pt}}m{#1}}
\newcolumntype{C}[1]{>{\centering\let\newline\\\arraybackslash\hspace{0pt}}m{#1}}
\newcolumntype{R}[1]{>{\raggedleft\let\newline\\\arraybackslash\hspace{0pt}}m{#1}}
\def\thm@space@setup{%
	\thm@preskip=6pt plus 1pt minus 3pt
	\thm@postskip=\thm@preskip
}
\def\maxwidth{\ifdim\Gin@nat@width>\linewidth\linewidth\else\Gin@nat@width\fi}
\def\maxheight{\ifdim\Gin@nat@height>\textheight\textheight\else\Gin@nat@height\fi}
\g@addto@macro{\UrlBreaks}{\UrlOrds}
\newcommand\paraspace{\vspace*{1ex}}
\providecommand\parab[1]{\paraspace\noindent\textbf{#1}}
\providecommand\parae[1]{\paraspace\textbf{\textit{#1}}}
\apptocmd\normalsize{%
	\abovedisplayskip=5pt
	\abovedisplayshortskip=5pt
	\belowdisplayskip=5pt
	\belowdisplayshortskip=5pt
}{}{}
\newcommand{\dempin}{{\small\textsf{DP}}\xspace}
\newcommand{\pop}{{\small\textsf{POP}}\xspace}
\newcommand{\opt}{{\small$\textsf{H}'$}\xspace}
\newcommand{\eqopt}{{\small\textsf{H}'}\xspace}
\newcommand{\heur}{{\small\textsf{H}}\xspace}
\newcommand{\metaopt}{\textsf{MetaOpt}\xspace}
\newcommand{\vbp}{\textsf{VBP}\xspace}
\newcommand{\ffd}{\textsf{FFD}\xspace}
\newcommand{\ffdsum}{\textsf{FFDSum}\xspace}
\newcommand{\sppifo}{\textsf{SP-PIFO}\xspace}
\newcommand{\aifo}{\textsf{AIFO}\xspace}
\newcommand{\pifo}{\textsf{PIFO}\xspace}
\newcommand{\qpd}{\textsf{QPD}\xspace}
\definecolor{ForestGreen}{RGB}{34,139,34}
\definecolor{mygreen}{RGB}{208, 244, 222}
\definecolor{myp}{RGB}{253, 226, 228}
\definecolor{myb}{RGB}{198, 222, 241}
\definecolor{mymagenta}{RGB}{255, 198, 255}
\definecolor{myo}{RGB}{255, 214, 165}
\algnewcommand{\IIf}[1]{\State\algorithmicif\ #1\ \algorithmicthen}
\algnewcommand{\EndIIf}{\unskip\ \algorithmicend\ \algorithmicif}
\newcommand{\ie}{\emph{i.e.,}\xspace}
\newcommand{\eg}{\emph{e.g.,}\xspace}
\newcommand{\secref}[1]{\S\ref{#1}}
\newcommand{\figref}[1]{Fig.~\ref{#1}}
\newcommand{\tabref}[1]{Table~\ref{#1}}
\newcommand{\eqnref}[1]{Equation~\ref{#1}}
\newcommand{\theoremref}[1]{Theorem~\ref{#1}}
\newcommand{\cut}[1]{}
\newcommand{\squishlist}{
	\begin{list}{$\bullet$}
		{ \setlength{\itemsep}{0pt}
			\setlength{\parsep}{3pt}
			\setlength{\topsep}{3pt}
			\setlength{\partopsep}{0pt}
			\setlength{\leftmargin}{1.5em}
			\setlength{\labelwidth}{1em}
			\setlength{\labelsep}{0.5em} } }
	\newcommand{\pgftextcircled}[1]{
		\setbox0=\hbox{#1}%
		\dimen0\wd0%
		\divide\dimen0 by 2%
		\begin{tikzpicture}[baseline=(a.base)]%
			\useasboundingbox (-\the\dimen0,0pt) rectangle (\the\dimen0,1pt);
			\node[circle,draw,outer sep=0pt,inner sep=0.1ex] (a) {#1};
		\end{tikzpicture}
	}
	\newcommand{\squishend}{
\end{list}  }
\newcommand{\ramesh}[1]{}
\newcommand{\behnaz}[1]{}
\newcommand{\pooria}[1]{}
\newcommand{\srikanth}[1]{}
\newcommand{\ryan}[1]{}
\newcommand{\santi}[1]{}
\newcommand{\ramesh}[1]{\todo[author=RG,color=blue!30,inline]{#1}}
\newcommand{\behnaz}[1]{\todo[author=BA,color=red!30,inline]{#1}}
\newcommand{\pooria}[1]{\todo[author=PN,color=green!30,inline]{#1}}
\newcommand{\srikanth}[1]{\todo[author=SK,color=cyan!30,inline]{#1}}
\newcommand{\ryan}[1]{\todo[author=RB,color=orange!30,inline]{#1}}
\newcommand{\santi}[1]{\todo[author=SS,color=yellow!30,inline]{#1}}
	\newcommand{\crdel}[1]{\sout{\textcolor{red}{#1}}}
	\newcommand{\crdel}[1]{}
\newcommand{\indicator}[1]{\mathbbm{1}\left[{#1}\right]}
 \newtheorem{theorem}{Theorem}
\newcommand{\xref}[1]{\S\ref{#1}}
\begin{document}
	
\let\textcircled=\pgftextcircled
\title{Finding Adversarial Inputs for Heuristics using Multi-level Optimization}
	
	
	\author{
	{\rm Pooria Namyar$^{\dag\ddag}$, Behnaz Arzani$^{\dag}$, Ryan Beckett$^{\dag}$, Santiago Segarra$^{\dag\star}$,} \\ 
	{\rm Himanshu Raj$^{\dag}$, Umesh Krishnaswamy$^{\dag}$, Ramesh Govindan$^{\ddag}$, Srikanth Kandula$^{\dag}$} \vspace{1mm}\\
	$^{\dag}$Microsoft, $^{\ddag}$University of Southern California, $^\star$Rice University}

\date{}
	
	
	
\maketitle
	
\noindent {\bf Abstract--}
Production systems use heuristics because they are faster or scale better than their optimal counterparts.
Yet, practitioners are often unaware of the performance gap between a heuristic and the optimum or between two heuristics in realistic scenarios.
We present \metaopt, a system that helps analyze heuristics.
Users specify the heuristic and the optimal (or another heuristic) as input, and {\metaopt} automatically encodes these efficiently for a solver to find performance gaps and their corresponding adversarial inputs.
Its suite of built-in optimizations helps it scale its analysis to practical problem sizes.
To show it is versatile, we used \metaopt to analyze heuristics from three domains (traffic engineering, vector bin packing, and packet scheduling).
We found a production traffic engineering heuristic can require 30\% more capacity than the optimal to satisfy realistic demands.
Based on the patterns in the adversarial inputs \metaopt produced, we modified the heuristic to reduce its performance gap by 12.5$\times$.
We examined adversarial inputs to a vector bin packing heuristic and proved a new lower bound on its performance.

\vspace{-2mm}
\section{Introduction}
\label{s:intro}

Many solutions to network and systems problems are heuristic approximations to potentially intractable optimal algorithms~\cite{ncflow,swan,b4,pop,arrow,tetris,synergy,teavar,soroush}.
These heuristics are often faster or scale better than their optimal counterparts.
However, operators often do not fully understand how these heuristics will behave with new and untested inputs or how far from the optimal their results may drift in realistic use.



For example, Microsoft uses a heuristic, demand pinning (\dempin), on its wide area network~\cite{blastshield,metaopt}.
It routes small demands (\ie demands $\leq$ a threshold) through their shortest path and applies a more computationally complex optimization~\cite{swan}, which jointly considers all the remaining demands.
We can construct an example (see \figref{f:pinning_issues}) where \dempin allocates 40\% less demand than the optimal routing.
A $40\%$ gap is a lower bound~--~the worst case gap can be higher.
With this gap, Microsoft may have to either over-provision their network by 40\%; delay 40\%; or drop 40\% of their customers' demand!

We do not understand the potential impact of such heuristics at scale: Does their gap depend on topology size? For what inputs do they perform poorly? Are there realistic inputs for which they perform poorly?
We ask if we can develop techniques to \emph{analyze heuristics} and answer such questions.


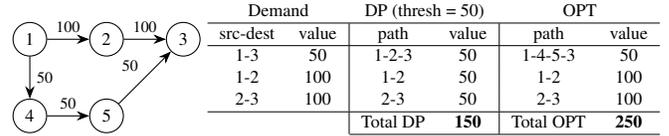
\begin{figure}  
	\centering  
	\begin{tikzpicture}[  
	scale=0.75, transform shape,  
	node distance=0.75cm,  
	font=\small,  
	every edge/.style={draw, -Stealth},  
	every edge quotes/.style={auto, font=\footnotesize},  
	state/.style={circle, draw}  
	]  
	\begin{scope}  
	\node[state] (1) {1};  
	\node[state, right=of 1] (2) {2};  
	\node[state, right=of 2] (3) {3};  
	\node[state, below=of 1] (4) {4};  
	\node[state, right=of 4] (5) {5};  
	
	\path  
	(1) edge[above, "100"] (2)  
	(2) edge[above, "100"] (3)  
	(4) edge[below, "50"] (5)  
	(1) edge[below, "50"] (4)  
	(5) edge[below, "50"] (3)  
	;  
	\end{scope}  
	
	\node[anchor=west] at ([yshift=-0.5cm]3.east){  
	\begin{tabular}{cc | cc | cc}  
	\multicolumn{2}{c}{Demand} & \multicolumn{2}{c}{DP (thresh = 50)} & \multicolumn{2}{c}{OPT} \\  
	\hline  
	src-dest & value & path & value & path & value \\  
	\hline  
	1-3 & 50 & 1-2-3 & 50 & 1-4-5-3 & 50 \\  
	1-2 & 100 & 1-2 & 50 & 1-2 & 100 \\  
	2-3 & 100 & 2-3 & 50 & 2-3 & 100 \\  
	\hline
	& & Total DP & \textbf{150} & Total OPT & \textbf{250}  \\
	\cline{3-6}
	\end{tabular}  } ; 
	\end{tikzpicture}  
\caption{Suboptimal performance of \dempin. (left) Topology with unidirectional links. (right) A set of demands and their flow allocations using the \dempin heuristic and the optimal ({\opt}) solution. \dempin first sends the demands at or below the threshold ($50$) over their shortest paths and then optimally routes the remaining demands. \label{f:pinning_issues}}
\end{figure}   



As a first step towards this goal, we have developed {\metaopt}, a system which allows users to \textit{automatically discover the performance gap} between a heuristic \heur and any other function \opt (which can be the optimal) for much larger problem sizes than in \figref{f:pinning_issues}.
\metaopt also outputs \textit{adversarial inputs} to these functions that cause large performance gaps.
Users can use \metaopt to (\secref{s:background}): (a) understand the performance gap of \heur relative to the optimal or to another heuristic; (b) examine adversarial inputs to provide performance bounds on \heur or to modify \heur to improve its performance gap.

In many problem domains, such as traffic engineering~\cite{swan,b4,blastshield,pop}, vector bin packing~\cite{VBP-Heuristics,tetris}, and packet scheduling~\cite{pifo,aifo,sp-pifo}, we can specify both \opt and \heur either as an optimization problem (with an objective and several constraints) or as a feasibility problem (with a set of constraints).
\metaopt permits users to specify \opt and \heur in one of these forms.
In the language of optimization theory, we can model the problem of finding large performance gaps between \opt and \heur as:

\begin{equation}
\arg\hspace{-12mm}\max_{{\text{s.t. input} \,\, \mathcal{I} \in {\sf ConstrainedSet}}}   {\eqopt}(\mathcal{I}) - {\heur}(\mathcal{I}),
\label{eqn:adversarial_outer}
\end{equation}
where \opt() and $\heur()$ take $\mathcal{I}$ as input and solve the optimal and heuristic algorithms. ${\sf ConstrainedSet}$ specifies a set of constraints that limit the set of values $\mathcal{I}$ can take.
In theory, we could throw this model at modern solvers~\cite{gurobi,zen} and obtain performance gaps.

However, in practice, modern solvers do not support these optimizations. 
This model is an instance of a bi-level optimization~\cite{GentleBilevel} (with connections to Stackelberg games~\cite{games}, see~\S\ref{sec::stackelberg})
, and practitioners have to re-write them into single-level optimizations before using a solver~\cite{GentleBilevel}.
%
%
Rewriting a bi-level optimization into a single-level one by hand is tedious and can lead to poor performance if done incorrectly.

{\metaopt} abstracts away this complexity~---~users only specify \opt and \heur (\secref{s:design}).
It also provides helper functions to make it even easier to specify \opt and \heur. These functions are especially useful for constructs (\eg conditionals, randomization) that are harder to express as convex optimization constraints.
Under the hood, \metaopt performs rewrites automatically, and supports multiple solvers (Gurobi~\cite{gurobi} and Zen~\cite{zen}).
We add three techniques that enable it to scale to large problem sizes (\eg large topologies and demands for traffic engineering).
First, since many rewrites can introduce non-linearities, we carefully selects which part of the input to rewrite.
Second, we introduce a new rewriting technique (Quantized Primal-Dual) that allows \metaopt to trades-off between scale and optimality.
Third, we design a new partitioning technique for graph-based problems to further improve its ability to scale.

We have applied \metaopt to several heuristics in traffic engineering~\cite{swan,b4,blastshield,pop}, vector bin packing~\cite{VBP-Heuristics,tetris}, and packet scheduling~\cite{pifo,aifo,sp-pifo}.
To demonstrate its versatility,we have used \metaopt to (a) study performance gaps of these heuristics, (b) analyze adversarial inputs to prove properties, and (c) devise and evaluate new heuristics.
\tabref{t:metaopt-results} summarizes our results:
\begin{itemize}[nosep,leftmargin=*]
\item \dempin allocates 33\% less demand compared to the optimal in large topologies. We analyzed adversarial inputs \metaopt found and designed modified-\dempin which reduced this gap by an order of magnitude.
\item For a two-dimensional vector bin packing heuristic \ffdsum, we show, for the first time, that it can require at least 2$\times$ as many bins as the optimal {\em across all problem sizes}.
\item A recently proposed programmable packet scheduler, \sppifo, an approximation of PIFO~\cite{pifo}, can delay the highest priority packet by at least 3$\times$ relative to PIFO.
\end{itemize}

\section{Heuristic Analysis at a Glance}
\label{s:background}

Network and systems designers use heuristics to solve problems either when they are NP-hard or when the optimum is too expensive to compute at relevant problem scales.
We use examples of heuristics to motivate what kinds of analyses designers might wish to perform on heuristics, then describe how \metaopt can aid these analyses.


\vspace{-0.7mm}
\subsection{Heuristics and their Importance}
\label{s:heur-their-import}

We describe heuristics from traffic engineering, cluster resource allocation, and switch packet scheduling.

\parab{Traffic engineering (TE).} There are different techniques to scale TE solutions to large networks/demands:

\parae{Demand Pinning (\dempin)}~\cite{blastshield} (~\S\ref{s:intro}), which Microsoft uses in production, pre-allocates flows along the shortest path for any node pair with demand smaller than a threshold $T_d$ and uses the SWAN~\cite{swan} optimizer on the rest.
%
%
When many demands are small, this can result in substantial speedup.


\parae{Partitioned Optimization Problems (\pop)}~\cite{pop} divides node pairs (and their demands) uniformly at random into partitions.
It then assigns each partition an equal share of the edge capacities and solves the original problem (\eg the SWAN LP optimization~\cite{swan}) once per partition.
\pop is faster than SWAN because it can solve each LP sub-problem much faster than the original~\cite{boyd_co} and can do so in parallel.
%


\parab{Vector bin packing (\vbp).}
Production deployments use \vbp to efficiently allocate resources in clusters~\cite{protean,tetris,synergy,yarn_vbp,jobpacking}.
One version of \vbp takes a set of balls and bins with specific sizes and multiple dimensions~(\eg memory, CPU, GPU~\cite{drf,borg,pond}) and tries to pack the balls into the fewest number of bins.
The optimal algorithm for this version is APX-hard~\cite{vbp-nphard}.
Many practitioners use a first fit decreasing~(\ffd) heuristic.
This is greedy and iterative: in each step, it picks the unassigned ball with the {\em largest weight} and places it on the {\em first} bin that {\em fits}  (that has enough capacity).
Prior works propose different ways to weight the balls~\cite{VBP-Heuristics,ffdprod-sandpiper,ffdsum-example,ffddiv-ibm,tetris,protean,synergy}.
One variant, \ffdsum, uses the sum of the dimensions across the items as the weights.

\parab{Packet scheduling}~\cite{sp-pifo,aifo}. Push-In-First-Out (PIFO~\cite{pifo}) queuing is a scheduling primitive that enables various packet scheduling algorithms for programmable switches.
\sppifo~\cite{sp-pifo} uses $n$ priority FIFO queues to approximate PIFO and presents a heuristic which we can implement at line rate.
A packet can only enter a queue if the priority of the queue, which is {\em usually} equal to the priority of the last item in it, is higher than the item's priority.
The algorithm scans queues from lowest to highest priority and places the item in the first queue that accepts it.
\sppifo updates all queue priorities if no queue can admit the packet.

\parab{Performance Analyses.} Often, heuristic designers wish to answer questions such as:
\begin{itemize}[nosep,leftmargin=*]
\item How far is my heuristic from the optimum?
\item What inputs cause my heuristic's performance to degrade at practical problem instances?
\item How can I re-design my heuristic to improve its performance?
\item How can I compare the performance of two heuristics?
\item Can I prove lower bounds on a heuristic's performance relative to another standard algorithm?
\end{itemize}



\subsection{\metaopt, a Heuristic Analyzer}
\label{s:meta-heur-analyz}

\metaopt is a heuristic analyzer that can help designers answer these questions.
\metaopt finds the \textit{performance gap} between a given $\heur()$ and and an alternative \opt() (where \opt() can be the optimal solution).
The performance of a heuristic measures its solution quality.
For example, one measure of performance in TE is the total flow that the heuristic admits, and for packet schedulers, it is average packet delay.
The performance gap is the difference in performance between $\heur()$ and \opt().
\metaopt also finds \textit{adversarial inputs} to $\heur()$~---~those that cause a significant performance gap.
\metaopt can analyze a broad set of well-defined heuristics (see~\secref{s:design}).
Those we described earlier fit into this set.

\subsection{Using \metaopt to Analyze Heuristics}
\label{s:using-meta-analyze}

{
\begin{table*}
	\centering
	\begin{footnotesize}
		\renewcommand{\arraystretch}{1.5}
		\begin{tabular}{m{1.8em}|p{6.2cm}p{3.8cm}p{5.1cm}}
			& \metaopt finds performance gaps & \metaopt helps prove properties & \metaopt helps modify heuristics 
			\\ \hline
			{{\sf TE}}
			& \dempin and \pop are 33.9\% and 20\% less efficient than optimal. \metaopt finds realistic demands with strong locality that produce the same gap.
			& ~--~
			& Modified-\dempin  (which we designed with {\metaopt}'s help) has an order of magnitude lower gap than \dempin. We cannot improve the gap by running \dempin and \pop in parallel.
			\\ 
			{\vbp}
			& \metaopt finds tighter performance gaps under realistic constraints. It also finds the same examples which took theoreticians decades to find and prove a tight bound for 1d-\ffd.
			& \metaopt helped prove a tighter approximation ratio for 2-d \ffdsum than previously known~\cite{VBP-Heuristics}.
			& ~--~
			\\ 
			\pifo
			& \metaopt finds \sppifo can delay the highest priority packet by $3\times$, and also finds inputs where \aifo incurs $6\times$ more priority inversion than \sppifo.
			& \metaopt helped prove a new bound on the weighted average delay of \sppifo.
			& Modified-\sppifo (which \metaopt helped design) has $2.5\times$ lower weighted average packet delay.
			\\ 
			
		\end{tabular}
	\end{footnotesize}
	\caption{\metaopt (1) finds the performance gap between the heuristic and optimal; (2) helps prove various properties about the heuristic; and (3) helps modify them to improve their performance.\label{t:metaopt-results}
      }
\end{table*}
}

Heuristic designers can use \metaopt to answer the questions we presented in \secref{s:heur-their-import}.
They can use \metaopt in two ways: (a) to find performance gaps, and (b) to prove properties or improve heuristics based on adversarial inputs it finds.
We next describe the results we obtained when we applied \metaopt to the heuristics in \secref{s:heur-their-import}.
We present more detail and other results in \secref{s:eval}.
These use cases are not the only way operators can use \metaopt, but they demonstrate its versatility (see \secref{s:conc}).


\parab{Finding performance gaps.} We show how \metaopt helps find performance gaps in TE and packet scheduling.
%

\parae{Performance gaps in traffic engineering.}
We used \metaopt to find the performance gaps for \dempin and \pop where \opt() is the optimal multi-commodity flow algorithm.
We compute the performance gap~---~the difference between the heuristic and the optimal which we normalize by the total network capacity.
%
The performance gap is a lower bound on the \textit{optimality gap}, the worst-case gap between the two.

We find \dempin and \pop incur $33.9\%$ and $20\%$ relative performance gaps on a large topology (Cogentco, \secref{s:eval}).
This means there exists (and we can find) adversarial traffic demands that cause \dempin to use at least $33.9\%$ more capacity compared to optimal
%
Network operators that use \dempin may need to over-provision the network by that much to satisfy this demand.

\metaopt, by default, searches for adversarial demands among all possible demands.
We can constrain \metaopt to search over \textit{realistic} demands. 
These exhibit temporal locality where few node pairs communicate. The gap for \pop and \dempin reduces by less than 1\% when we run \metaopt with this constraint.
%

\parae{Performance gaps in packet scheduling.} We compare \sppifo to PIFO.
%
%
We compute and compare the priority-weighted average packet delay (\secref{s:eval}) between the two algorithms which penalizes them if they increase the delay of high-priority packets.
\metaopt shows there exists an input packet sequence where \sppifo is 3$\times$ worse than PIFO.

We used \metaopt to compare \sppifo and \aifo~\cite{aifo} (two heuristics).
\aifo emulates PIFO through a single FIFO queue and replaces $\heur()$ in this scenario.
{\metaopt} finds inputs for which \aifo incurs 6$\times$ more priority inversions than \sppifo.
Such analyses can help designers weigh performance trade-offs against switch resource usage.

\parab{Proving properties of and improving heuristics.} We used \metaopt to find adversarial inputs for various heuristics and analyzed these inputs to 
prove performance bounds for these heuristics or to improve them.

\parae{Proving a new bound for vector bin-packing.} \vbp heuristics try to minimize the number of bins they use.
Theoreticians prove bounds on their approximation ratio: the worst-case ratio of the number of bins the heuristic uses compared to the optimal over any input.
%
Recent work~\cite{VBP-Heuristics} showed 2-dimensional \ffdsum asymptotically approaches an approximation ratio of 2 (where the optimal uses nearly infinite bins).
We proved (\S\ref{s:eval}) the approximation ratio is always at least 2~---~even when the optimal requires a finite number of bins!

\parae{Proving a new bound for packet scheduling and improving the heuristic.}
We analyzed adversarial inputs \metaopt found for \sppifo and proved a lower bound on its priority-weighted average delay relative to PIFO. The bound is a function of the priority range and the number of packets.

Adversarial inputs to \sppifo trigger priority inversions, which means they cause \sppifo to enqueue high-priority packets behind low-priority ones.
We tested a Modified-\sppifo that splits queues into groups; it assigns each group a priority range, and runs \sppifo on each group independently.
Modified-\sppifo reduces the performance gap of \sppifo by {$2.5\times$}.

\parae{Improving traffic engineering heuristics.} We found \dempin 
 performs badly when small demands traverse long paths.
We used \metaopt to analyze a Modified-{\dempin} which only routes demands through their shortest paths if the shortest path is less than $k$ hops and the demand is less than than $T_{d}$.
This simple change reduced the performance gap an order of magnitude.

\section{\metaopt Design}
\label{s:design}

Our goal is to build a {\em widely applicable} system that finds {\em large performance gaps} between \opt and \heur~{\em quickly and at scale}.

\subsection{\metaopt Approach}
\label{s:metaopt-approach}

This is a hard problem when \opt and \heur are arbitrary algorithms,
but we observe we can formulate many heuristics in networks and systems as:

\begin{description}[nosep,leftmargin=*]
\item[Convex optimization problems] in which the heuristic seeks to optimize an objective, subject to a collection of constraints. \dempin falls into this category: for large demands it solves SWAN's optimization~\cite{swan}.
\item[Feasibility problems] in which the heuristic searches for a solution that satisfies a collection of constraints. \ffdsum falls into this category: it packs balls into bins subject to weight constraints.
\end{description}

%
%
%
%
%
%
%
%

\begin{figure}[t!]
\includegraphics[width=\linewidth]{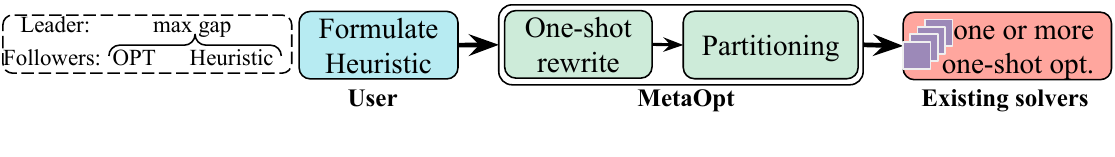}
\caption{{\metaopt}'s workflow involves four steps; (1) User encodes the heuristic~\secref{s:d:heur_form}, (2) {\metaopt} automatically applies rewrites to obtain a single-level optimization~\secref{s:automatic-rewrites}, (3) it partitions the problem into smaller subproblems to achieve scalability~\secref{s:d:part}, (4) {\metaopt} feeds the resulting optimizations into existing solvers~\cite{gurobi,Z3} and finds large performance gaps. \label{f:metaopt_workflow}}
\end{figure}

\begin{table}[t!]
	\begin{footnotesize}
		\begin{tabular}{cl|ll}
			 {\bf Optimal} & {\bf Heur.} & {\bf Formul.} & {\bf Rewrite} \\\hline
			 \multirow{2}{*}{\sf OptMaxFlow} & {\pop} & Convex(Random)~\secref{ss:convexify:te_1} & KKT/PD \\
			 & {\dempin} & Convex(Conditional)~\secref{ss:convexify:te_1} & KKT/PD  \\
			 {VBP MILP} & {\ffd} & NonConv.(Greedy)~\secref{ss:convexify:vbp_1} &  Feasibility \\
			 \multirow{2}{*}{\pifo MILP} & \sppifo  & NonConv.(Priority)~\secref{a:s:sp-pifo} & Feasibility \\
			 & \aifo & NonConv.(Admission)~\secref{a:s:aifo} & Feasibility
			 \\\hline
		\end{tabular}
	\end{footnotesize}
	\caption{Overview of the five heuristics we explored in this paper. We cover how to formulate the heuristics as optimizations in appendix and discuss their rewrites as constraints in~\secref{s:automatic-rewrites}.\label{t:metaopt_usecases}}
\end{table}

We can find the performance gap between any \opt and \heur through a bi-level or meta optimization (hence \metaopt) as long as \heur falls in one of these two classes. We do not need convexity for \opt~---~we need it to be either an optimization or a feasibility problem. We model the problem as\footnote{Note that we can transform minimization optimizations to maximization by negating their objective.}:

{\small
	\vspace{-3mm}
	\begin{alignat}{3}
&		\arg\max_{\mathcal{I}} \,\, {\sf H'}(\mathcal{I}) - {\sf H}(\mathcal{I}) && \quad\text{(leader problem)}\label{eq:metaopt_full_1} \nonumber\\
&		\mbox{s.t.} \quad {\mathcal{I} \in {\sf ConstrainedSet}} && \quad\text{(input constraints)}\nonumber\\
&		{\sf H}'(\mathcal{I}) = \max_{\mathbf{f}' \in \mathcal{F}'}{\sf H'\_Objective}(\mathbf{f}',\mathcal{I})  && \quad\text{(optimal)} \nonumber\\
&		{\sf H}(\mathcal{I}) = \max_{\mathbf{f} \in \mathcal{F}} {\sf H\_Objective}(\mathbf{f}, \mathcal{I})&& \quad\text{(heuristic)} 
\end{alignat}} 
\vspace{-2mm}
	
\noindent 
where the {\em leader} or {\em outer} optimization maximizes the difference between the two functions (\ie the performance gap) over a space of possible inputs $\mathcal{I}$.
This leader problem is subject to {\em follower} or {\em inner} problems (\opt and $\heur$).
We model the performance of ${\sf H'}$ when we apply it to input $\mathcal{I}$ through ${\sf H'\_Objective}$. This function decides the values for the variables $\mathbf{f}'$, internally encodes problem constraints, and computes the overall performance of \opt. The function ${\sf H'\_Objective}$ treats the outer problem's variables,  $\mathcal{I}$, as input and constant.
%
%
We define ${\sf H\_Objective}$  the same way.

%

\parab{Bi-level Optimization: A Brief Primer.}
Modern solvers do not directly support the style of bi-level optimizations we described in \eqnref{eq:metaopt_full_1}~\cite{GentleBilevel}.
To solve these, users usually rewrite the bi-level optimization as a single-level optimization~\cite{gurobi}.
These rewrites convert an optimization problem into a set of feasibility constraints:
if the inner problems are both optimizations, the rewrites would replace both ${\sf H'\_Objective}$ and ${\sf H\_Objective}$ with a set of feasibility constraints in the outer optimization.
The resulting formulation is a single-level optimization that modern solvers~\cite{gurobi,zen} can attempt to solve. 

\begin{figure}[t!]
  \begin{center}
    \resizebox{0.8\columnwidth}{!}{
    \begin{tikzpicture}[->,>=stealth]
      \node[state,
      text width=3.3cm] (A) 
      {\begin{tabular}{l}
         \textbf{min}$_{w, \ell}$ $\big( w^2 + \ell^2 \big)$\\[.5em]
         $2 \cdot (w + \ell) \geq P$ \\
         $P \geq 0$ \\
       \end{tabular}};
     
     \node[state,
     text width=3.2cm,
     right of=A,
     yshift = -1.1cm,
     node distance=4.8cm,
     anchor=center] (B)
     {%
       {\begin{tabular}{l}
          \textbf{solve} for $w, \ell, \lambda$\\[.5em]
          $2 (w + \ell) \geq P$ \\
          $P \geq 0$ \\
          $\lambda \geq 0$ \\
          $2w - 2\lambda = 0$ \\
          $2\ell - 2\lambda = 0$ \\
				\fcolorbox{myp}{myp}{$\lambda\big( w + \ell - \frac{P}{2} \big) = 0$}\\
				\end{tabular}}
		};
		
		\node[state,
		below of=A,
		yshift=-1.4cm,
		fill={rgb,255:red,222;green,226;blue,255},
		anchor=center,
		text width=3.6cm] (C) 
		{%
			{\renewcommand{\arraystretch}{1.3}
				\begin{tabular}{p{1.8cm} p{1.2cm}}
				$w = \ell = \frac{P}{4}$  & $\lambda = \frac{P}{4}$ \\
				\end{tabular}}
		};
		
		\path (A) 	edge[bend left=20]  node[anchor=south,above]{KKT}
		node[anchor=north,below]{encode} (B)
		(A)     	edge[bend right=20] node[anchor=south,right]{optimize}   (C)
		(B)       	edge[bend left=20]  node[anchor=south,above]{solve}      (C);
		
              \end{tikzpicture}
              }
	\end{center}
	\caption{{\label{f:eg_conversion_1} Example rewrite using KKT in an example where we find the optimal width $w$ and length $\ell$ for a rectangle such that the perimeter is $\geq P$. The inner variables are $w$ and $\ell$. The right panel shows the feasibility problem using the KKT theorem. Equations with $\lambda$ variables correspond to first order derivatives of inequality constraints in the original problem. $P$ is a variable of the outer optimization but is treated as a constant in the inner problem.}}
\end{figure}
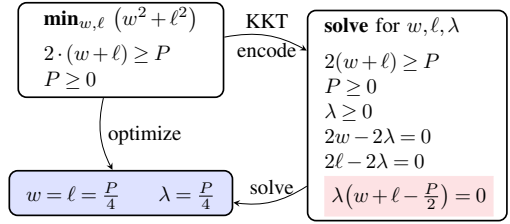

\figref{f:eg_conversion_1} is an example of such a rewrite.
This somewhat contrived example finds the optimal length and width of a rectangle subject to some constraints.
The rewrite uses the Karush–Kuhn–Tucker (KKT) theorem~\cite{boyd_co} and converts convex optimizations that have at least one strictly feasible point~\cite{boyd_co} into feasibility problems.
The theorem states any point that solves the new problem matches the solution of the original.
We describe another technique in \secref{s:d:rewrites} which exploits the Primal-Dual theorem~\cite{boyd_co}.
Both of these rewriting techniques produce a single-level optimization equivalent to the bi-level formulation if the inner problems are convex~\cite{GentleBilevel}.
The problem \metaopt solves has properties that allow us to reduce the overhead of these rewrites (see~\S\ref{s:automatic-rewrites}). 

\subsection{\metaopt: The User View}
\label{s:metaopt:-user-view}
\label{s:d:heur_form}

\parab{Inputs.}
\metaopt could have asked the user to input the single-level formulation, but this is hard and error-prone. 
The re-written single-level formulation can have an \textit{order of magnitude} more constraints than the original bi-level formulation (\secref{s:eval}) and is hard to optimize for 
practitioners that do not understand bi-level optimizations.

%

\metaopt simply lets the user input \opt and \heur (\figref{f:metaopt_workflow}).
It then automatically produces a single-level formulation, optimizes the re-writes, feeds these into a solver of the user's choice (\metaopt currently supports Gurobi~\cite{gurobi} and Zen~\cite{zen}), and produces performance gaps and adversarial inputs.

\parab{How to specify \opt or \heur}.
It can be hard to describe \opt or \heur in the language of optimization.
For example, a user who wants to compare the performance of \dempin and the optimal TE multi-commodity flow would have  to
specify the formulation of both of these algorithms.
Standard textbooks describe the former~\cite{BG}, so we focus on the latter.

\dempin involves a conditional (an \texttt{if} statement on the left of \figref{fig:dp-in-metaopt}).
We can use the big-M method~\cite{boyd_co} (\secref{sec:dempin_bigM}) to convert it into constraints optimization solvers support. 
\metaopt provides a helper function \textsf{ForceToZeroIfLeq} (see \figref{fig:dp-in-metaopt}) to help the user do this conversion.
This level of indirection not only makes it easier for the user to specify \dempin but also gives \metaopt the flexibility to optimize or change the formulations when it needs to.
For example, the big-M method causes numerical instability in larger problems and \metaopt uses an alternate method to convert it to constraints (see~\secref{ss:convexify:te_1}).

Our helper functions (\secref{s:a:helper}, \tabref{t:helper:list}) codify common design patterns and help specify constraints across a diverse set of problems.
%
We show  how to use and combine them to model other heuristics such as \ffd which involves greedy decisions, and \sppifo which involves dynamic weight updates.  
%
%
These helper functions make it easier for those without a background in optimization theory to use \metaopt and encode succinct and readable models. Users may still need to write some constraints themselves if we do not have a relevant helper function.
%

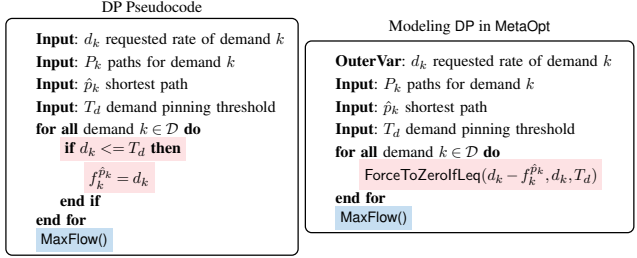
\begin{figure}[t!]
  \begin{center}
    \resizebox{\columnwidth}{!}{
		\begin{tikzpicture}[->,>=stealth]
			\node[state, text width=6.3cm, label=DP Pseudocode] (A)
			{
				\begin{algorithmic}[t]
					\Input : $d_k$ requested rate of demand $k$ \EndInput
					\Input : $P_k$ paths for demand $k$ \EndInput
					\Input : $\hat{p}_k$ shortest path \EndInput
					\Input : $T_d$ demand pinning threshold \EndInput
					\ForAll{{$\text{demand}~k \in \mathcal{D}$}}
					\CIF{$d_k <= T_d$}
					\State \colorbox{myp}{$f^{\hat{p}_k}_k = d_k$}
					\EndIf
					\EndFor
					\State \colorbox{myb}{{\sf MaxFlow()}}
				\end{algorithmic}
			};
			
			\node[state,
			text width=7.1cm,
			right of=A,
			node distance=7.0cm,
			anchor=center, label={Modeling {\sf DP} in {\sf MetaOpt}}] (B)
			{%
				
				\begin{algorithmic}[t]
					\OuterVar : $d_k$ requested rate of demand $k$ \EndOuterVar
					\Input : $P_k$ paths for demand $k$ \EndInput
					\Input : $\hat{p}_k$ shortest path \EndInput
					\Input : $T_d$ demand pinning threshold \EndInput
					\ForAll{{$\text{demand}~k \in \mathcal{D}$}}
					\State \colorbox{myp}{${\sf ForceToZeroIfLeq}(d_k - f^{\hat{p}_k}_k, d_k, T_d)$}
					\EndFor
					\State \colorbox{myb}{{\sf MaxFlow()}}
				\end{algorithmic}
			};

			
                      \end{tikzpicture}
                      }
              \end{center}
              \caption{The pseudo code for \dempin and how \metaopt models it.}
              \label{fig:dp-in-metaopt}
\end{figure}

\subsection{Automatic Rewrites}
\label{s:automatic-rewrites}

\metaopt produces a bi-level optimization from \opt and \heur (as in \eqnref{eq:metaopt_full_1}), and automatically rewrites it.
%
While the underlying theory behind rewrites is well-known~\cite{GentleBilevel}, to our knowledge, there are no automated re-writers.
Rewriting, till date, has required human intervention.
We need to be careful when we automate rewrites.
%
For example, it is hard to model non-linear constraints which include multiplication of variables. In a Primal-Dual rewrite (\secref{s:d:rewrites}), the constraints in the dual depend on whether the corresponding primal variable is unconstrained, if it is positive or negative, and on the type of optimization (maximization or minimization)~\cite{boyd_co}.
%

We have developed automatic rewrite techniques for KKT, Primal-Dual and a new variant of the latter (\ie Quantized Primal-Dual \secref{s:d:rewrites}).
Users can choose which rewrite they use.

%

\metaopt does not blindly re-write the bi-level formulation but only re-writes the inner problems if it needs to, a technique we call \textit{selective rewriting}.
In two instances, the inner problem does not require a re-write (\figref{fig::scope}): if it is a feasibility problem or if it is ``aligned'' (we can directly merge its constraints into the outer problem in both cases). 
An aligned inner problem is one where if we optimize for the outer problem's objective, we also (either directly or indirectly) optimize the inner problem as a consequence. 
For example, the outer problem is aligned with the inner problem if both problems are maximizations and when we maximize the objective of the outer problem, we also maximizes the objective of the inner problem.

We observe the objective of \metaopt is such that one of \opt or \heur is always aligned with the outer problem: the outer problem wants to maximize \opt and minimize \heur so that it maximizes the difference: this aligns with \opt if \opt is a maximization problem and with \heur when \heur is a minimization.
We do not need to rewrite the aligned inner problem and only need to merge its constraints into the outer problem and remove its objective (both \heur and \opt solve the same problem and therefore they are either both maximizations or both minimizations).
%

For all other instances, we need to re-write.
KKT and Primal-Dual rewrites only apply to models with linear or convex constraints (\sppifo and \ffdsum are non-convex examples).
\metaopt rewrites only if the inner problem is a \textit{non-aligned convex optimization}.

With all this, \metaopt generates a single-level formulation that scales well and preserves the theoretical properties that allow rewrites~---~the single-level formulation it produces is equivalent to the bi-level formulation.


\begin{figure}  
  \centering
  \resizebox{0.9\columnwidth}{!}{
  \begin{tikzpicture}
    \node[] at (-6,3)(First) {
      \begin{tikzpicture}[
        node distance=1.5cm,  
        auto,  
        every node/.style={align=center, font=\footnotesize},  
        decision/.style={diamond, fill=red!10, draw, text width=1.1cm, inner sep=0pt},  
        block1/.style={rectangle, fill=blue!10,  draw, text width=0.6cm, minimum height=0.5cm},  
        block2/.style={rectangle, fill=blue!10, draw, text width=1.4cm, minimum height=1cm},  
        block3/.style={rectangle, fill=blue!10, draw, text width=0.9cm, minimum height=0.5cm},  
        line/.style={draw, -latex}  
        ]  
        
        \node[decision] at (-2.3cm,0) (aligned) {aligned?};  
        \node[decision] at (-0.5cm,0) (feasibility) {feasibility?}; 
        \node[block1] at (-2.3cm, -1.4cm) (do_nothing1) {as is}; 
        \node[decision] at (1.2cm, 0) (convex) {convex?};
        \node[block1] at (-0.5cm, -1.4cm) (do_nothing2) {as is};
        \node[block2] at (3.1cm, 0) (kkt) {primal-dual\\kkt};
        \node[block3] at (1.2cm, -1.4cm) (do_nothing3) {NA};

        
        \draw[line, dashed] (aligned.west) ++ (-0.5cm,0cm) node[above, xshift=-0.2cm]{\opt or \heur} -- (aligned.west);

        \path[line] (aligned) -- node[above, text=red]{N} (feasibility); 
        \path[line] (aligned) -- node[left,text=ForestGreen]{Y} (do_nothing1);  
        \path[line] (feasibility) -- node[above, text=red]{N} (convex);
        \path[line] (feasibility) -- node[left,text=ForestGreen]{Y} (do_nothing2);
        \path[line] (convex) -- node[left, text=red]{N} (do_nothing3);
        \path[line] (convex) -- node[above,text=ForestGreen]{Y} (kkt);
      \end{tikzpicture}
    };

        
        
	
	
  \end{tikzpicture}
  }
  \caption{How \metaopt automatically converts the bi-level problem to a single-level optimization. \metaopt supports any follower which either (1) is a convex optimization; (2) is a feasibility problem; or (3) has an objective which aligns with the outer problem\label{fig::scope}.
  }  
\end{figure}
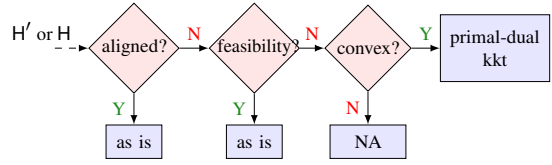

\vspace{0.05in}

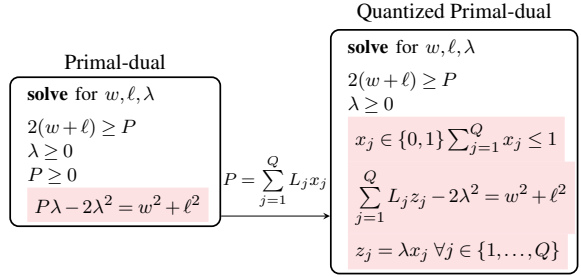
\begin{figure}[t!]
\begin{center}
	\resizebox{0.91\columnwidth}{!}{
	\begin{tikzpicture}[->,>=stealth]
		\node[state, text width=3.3cm, label=Primal-dual] (A)
		{
			{
				\small
				\begin{tabular}{l}
					\textbf{solve} for $w, \ell, \lambda$\\[.5em]
					{$2 (w + \ell) \geq P$} \\
					{$\lambda \geq 0$} \\
					$P \geq 0$ \\
					\fcolorbox{myp}{myp}{$P \lambda -2 \lambda^2 = w^2 + \ell^2$}
				\end{tabular}
			}
		};
		
		\node[state,
		text width=3.9cm,
		right of=A,
		node distance=5.55cm,
		anchor=center, label=Quantized Primal-dual] (B)
		{%
			{
				\small
				\begin{tabular}{l}
					\textbf{solve} for $w, \ell, \lambda$\\[.5em]
					{$2 (w + \ell) \geq P$} \\
					{$\lambda \geq 0$} \\
					\fcolorbox{myp}{myp}{$x_j \in \{0,1\} \sum_{j=1}^{Q} x_j \leq 1$}\\
					\fcolorbox{myp}{myp}{$\sum\limits_{j=1}^{Q}{L_j z_j} -2 \lambda^2 = w^2 + \ell^2$}\\
					\fcolorbox{myp}{myp}{$z_j = \lambda x_j \,\, \forall j \in \{1,\dots,Q\}$}
				\end{tabular}
			}
		};
		
		\begin{scope}[transform canvas={yshift=-3em}]
			\draw [->] (A) -- node[anchor=north,above] {{\footnotesize $P = \sum\limits_{j=1}^{Q}{L_j x_j}$}} (B);
		\end{scope}
	\end{tikzpicture}
	}
\end{center}
\caption{Left shows rewrite through primal-dual theorem. This rewrite has fewer constraints and different multiplicative terms~($\lambda P$ in the bottom equation versus $\lambda (w + \ell - \frac{P}{2})$ in the KKT rewrite). The quadratic constraint (in pink) is due to the quadratic objective of the original problem. On the right, we show how we quantize the parameters in the outer problem~($P$ in this case). The \qpd rewrite no longer has any multiplicative terms since we can linearize multiplication of binary and continuous variables. The values $L_j$ are constants and $x_j$ are binary variables.\label{f:pd_conversion_1}}
\end{figure}

\subsection{The Quantized Primal-Dual Rewrite}
\label{s:d:rewrites}

The KKT rewrite does not scale beyond small problem sizes.
This is primarily because it introduces multiplicative terms (pink highlighted constraint in \figref{f:eg_conversion_1}).
Commodity solvers support such constraints (\eg special ordered sets or SOS in Gurobi~\cite{gurobi} and disjunctions in Z3~\cite{Z3}).
However, these constraints slow down the solvers~--~the number of constraints with multiplications dictates the latency of the solver.

A similar observation holds for the Primal-Dual rewrite~\cite{GentleBilevel}.
This rewrite uses the strong duality theorem~\cite{boyd_co}: any feasible point of a convex problem is optimal iff the primal objective at that point is the same as the dual.
The primal-dual rewrite converts an optimization into a feasibility problem which contains the primal and dual constraints as well as an additional constraint that ensures the primal and dual objectives are equal.
In general, this rewrite can result in quadratic constraints; \figref{f:pd_conversion_1} (left) shows the primal-dual rewrite for the optimization we described in \figref{f:eg_conversion_1}.
Solvers like Gurobi~\cite{gurobi} support such constraints, but at the expense of scalability.

To scale, we have developed a technique, called \textit{Quantized Primal-Dual} (\qpd), to convert the primal-dual rewrite into an easier problem (see \figref{f:pd_conversion_1} for an end-to-end example).
%
%
%
%
We replace the input $P$ with $\sum_{j=1}^{Q} L_j x_j$ where $L_j$ are constants we choose a priori and $x_j$ are binary variables.
We require $\sum_j x_j \leq 1$.
This means the outer problem has to pick one of the $Q+1$ values ($0, L_1, \ldots, L_Q$) for $P$.
We only need to quantize the leader's variables that appear in the multiplicative terms in the primal-dual rewrite.
These variables only appear in the (misaligned) follower's constraints.
The inner problem is still optimal under this rewrite but we trade off optimality of the outer problem for speed by quantizing the input space.



Two challenges remain: (a) how to pick the number of quanta; and (b) which quanta to pick (\eg which values for $L_j$).
More quanta leads to more integer variables which slow down the solver.
However, fewer quanta leads to poor-quality adversaries because we now constrain the input variables to only a few pre-selected values.

Since the quantized primal-dual rewrite is much faster to solve than the other rewrites, we can sweep multiple quanta choices and pick the best.
We use the exact KKT rewrite on smaller problem instances to find good candidates. We observe empirically: adversarial inputs occur at so-called extreme points.
For example, the worst-case demands have values of either $0$ or the maximum possible in the \pop heuristic\footnote{For {\dempin}, the worst-case demands take values of $0$, the demand pinning threshold $T_d$, or the maximum possible demand.}.
Although we do not have a formal proof, we conjecture that the intuition behind these observations is similar to the intuition behind the simplex theorem~\cite{bertsimasLinOptBook}.
  
We evaluate how these rewrites impact the problem size and investigate the approximation gap quantized primal-dual introduces in \secref{s:eval}.
%

\begin{figure}[t!]
	\centering
	\subfigure[Clustered Topology.]{\centering\includegraphics[width=0.45\linewidth]{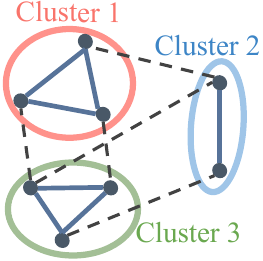}\label{fig:metaopt-cluster-example}} \hfill
	\subfigure[Clustered Demand Matrix.]{\centering\includegraphics[width=0.51\linewidth]{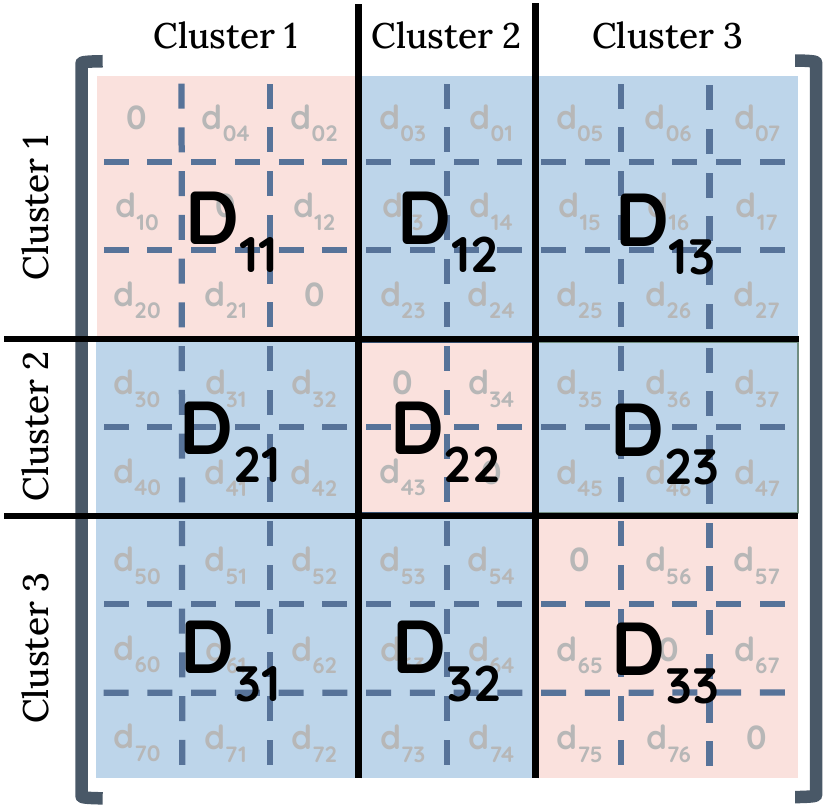} \label{fig:metaopt-cluster-demand-matrix}} \hfill
	\subfigure[MetaOpt Clustering Method.]{\centering\includegraphics[width=1.0\linewidth]{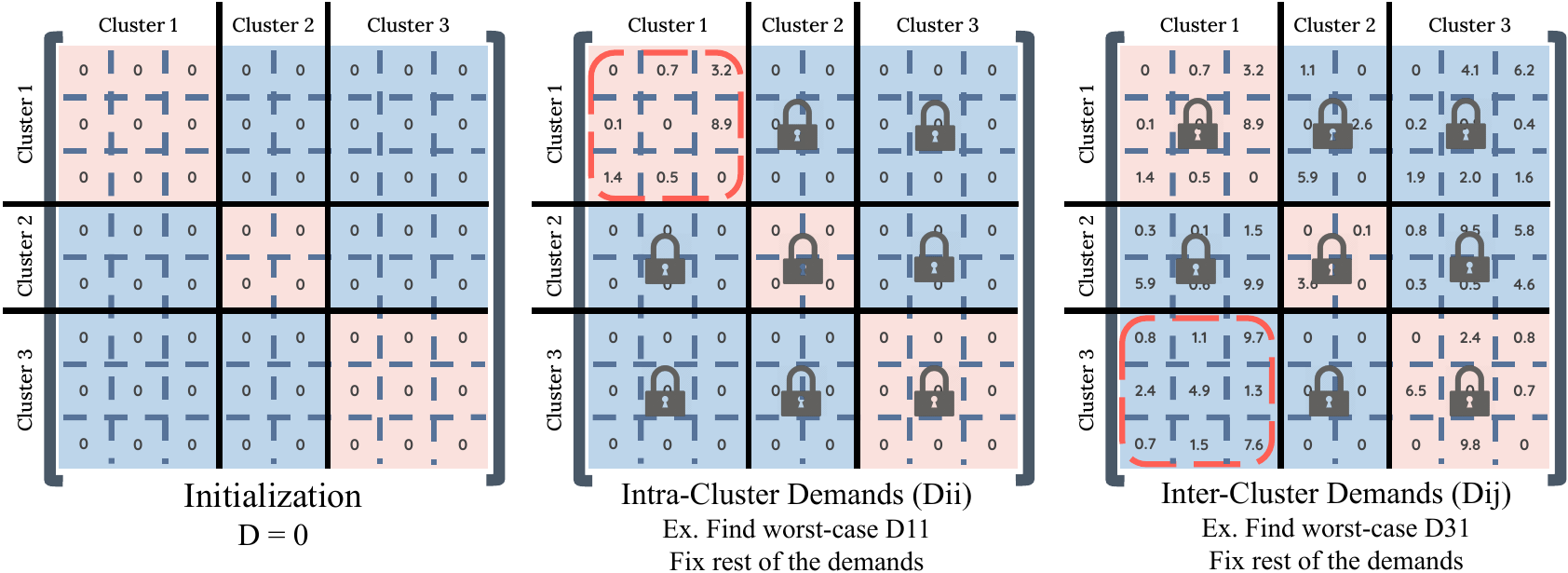} \label{fig:metaopt-cluster-method}}
	\caption{\label{f:partitioning} How we cluster in {\metaopt}. We first find demands that maximize the gap between \opt and \heur in each cluster. We then fix the demands within each cluster and one-by-one find demands between pairs of clusters that increase the gap (we fix the demands we found in the previous steps).}
\end{figure}

\subsection{Partitioning to Scale {\sf MetaOpt}}
\label{s:d:part}

To analyze problem instances at practical scales such as TE heuristics for realistic topologies and demands, we have found it necessary to use more aggressive scaling techniques.
One such technique is \textit{partitioning}.
We can partition any problem but we show the key steps in~\figref{f:partitioning} for the TE heuristics where the problem has an intrinsic graph structure (\ie the topology).
First, we partition nodes in the underlying network graph into clusters and solve the rewritten single-level optimization on each cluster in parallel.
In this step, we only consider the demands within each cluster~---~those on the block diagonal of the demand matrix (the pink blocks in~\figref{f:partitioning}).

We then freeze the demands from the last step and find demands between pairs of clusters that make the gap worse~---~we solve the rewritten problem on each pair of clusters.
This step iteratively fills the blue colored blocks in the demand matrix in~\figref{f:partitioning}.


We can parallelize the second step between cluster pairs with little overlap and produce an overall demand by adding the values \metaopt discovers after invoking each optimization.
This method speeds up \metaopt because each individual optimization, whether per cluster or per cluster pair, is much smaller than the overall problem.

We empirically find this partitioning approach consistently discovers inputs with large performance gaps.
This is because more than one adversarial input exists and our partitioning method does not bias against them.
For example, we find the adversarial inputs for \dempin follow a common pattern where demands between far apart nodes are just below the threshold. For such inputs, the heuristic wastes the capacity on many links when it routes the demands on their shortest paths.
In contrast, the optimal routing allocates those link capacities to multiple demands between nearby nodes.
Our partitioning method still allows \metaopt to find many inputs that have this pattern.

\section{Evaluation}
\label{s:eval}

We apply \metaopt to traffic engineering, vector bin packing, and packet scheduling heuristics to show it is general.
\metaopt helped us quantify and understand the performance gaps of heuristics, prove theoretical properties, and design better heuristics with lower performance gaps.
\tabref{t:metaopt-results} summarizes our findings.
We also show why the optimizations in \metaopt are important and quantify how quickly it finds performance gaps.

\parab{Implementation.}
Our \metaopt prototype is in C\# and uses Gurobi v9.5.2~\cite{gurobi}; we also have a port that uses Z3~\cite{Z3}.
To partition the graph, we adapt previous code~\cite{ncflow,communities_code} for spectral clustering~\cite{njw} and FM partitioning~\cite{louvain_method,communities} and report results for different cluster numbers and clustering techniques.

\subsection{Heuristics for WAN Traffic Engineering}
\label{s:eval:te}

In this section, we (a) obtain performance gaps for \dempin and \pop with respect to the optimal max-flow algorithm on large topologies, and  (b) devise modified versions of these heuristics based on our analysis of adversarial patterns.

\begin{table}[t]
	\centering
	{\footnotesize
		\begin{tabular}{c c c c c c}
			\textbf{Topology} & \textbf{\#Nodes} & \textbf{\#Edges} & \textbf{\#Part.} & \textbf{\dempin} & \textbf{\pop} \\\hline
			\textbf{\href{http://www.topology-zoo.org/maps/Cogentco.jpg}{Cogentco}} & $197$ & $486$ & $10$ & $33.9\%$ & $20.76\%$ \\
			\textbf{\href{http://www.topology-zoo.org/maps/Uninett2010.jpg}{Uninett2010}} & $74$ & $202$ & 8 & 28.4\% & 20.15\% \\
			\textbf{Abilene}~\cite{Abilene} & $10$ & $26$ & -- & $12.69\%$ & $17.31\%$ \\
			\textbf{B4}~\cite{b4} & $12$ & $38$ & -- & $13.16\%$ & $17.89\%$\\
			\textbf{SWAN}~\cite{swan} & $8$ & $24$ & -- & $2.29\%$ & $22.08\%$\\		
		\end{tabular}
	}
	\caption{\footnotesize Details of the topologies used in~\secref{s:eval:te} and discovered gap.\label{t:topos}}
\end{table}



\parab{Experiment Setup.}
We use $K$-shortest paths~\cite{YenKLoopLess} to find the paths between node pairs~($=4$ if unspecified).
We constrain the demands to be below a maximum value (half the average link capacity if unspecified) to ensure they are realistic and a single demand does not create a bottleneck.
For \dempin, we vary the demand pinning threshold (=$5$\% of average link capacity if unspecified).
For \pop, we vary the number of partitions (=$2$ if unspecified) and report the average gap over $5$ random trials (see~\secref{ss:convexify:te_1}).
We report runtimes on an AMD Operaton 2.4GHz CPU (6234) with 24 cores and 64GB of memory.
We use all available threads for all experiments (unless mentioned otherwise).
We timeout each optimization after $20$ minutes when we use partitioning.

\parae{Topologies.} We use two large topologies from~\cite{itzoo} and three public production topologies~\cite{swan,Abilene,b4} (\tabref{t:topos}).

\parae{Metrics.} We normalize the performance gap by the sum of the link capacities so we can compare across different scales.



\parab{Finding performance gaps.}
We compare \dempin and \pop to the optimal max-flow.
We use the Quantized Primal-Dual rewrite (\secref{s:d:rewrites}) and partitioning (\secref{s:d:part})
for most experiments but did not need the partitioning technique for small topologies (SWAN, B4 and Abilene).
We evaluate on topologies that range from 8 to nearly 200 nodes.
The performance gap, relative to the optimal, for \dempin ranges from $\sim2\%$ on SWAN to over 33\% (\tabref{t:topos}).
\pop also exhibits a large performance gap (up to 22\%).
%

These performance gaps are over any possible input demand.
We can also use \metaopt to obtain performance gaps on realistic inputs.
%
Production demands have strong locality~\cite{ncflow} and are very sparse~--~a small fraction of pairs have non-zero demands.
We can express these properties in \metaopt through constraints on the input space ($\mathcal{I}$ in \eqnref{eq:metaopt_full_1}).
%
With such inputs, the gaps for \dempin and \pop are only <1\% lower than the unconstrained case, but the adversarial demands we find are sparser and more local (\figref{f:real-constr:locality}).

\parab{Designing better heuristics.}
We can use the adversarial inputs \metaopt finds to design new heuristics or explore whether we can combine heuristics to improve them.
We first describe how we identified patterns in the adversarial inputs \metaopt discovered for \dempin and \pop and then show how we used these patterns to improve these heuristics.

\parae{Adversarial input patterns for \dempin.}
Intuitively, the performance gap of \dempin increases as we increase its threshold since the heuristic forces more demands on their shortest path.
Yet, the gap grows faster on some topologies even though they have roughly the same number of nodes and edges and even the same diameter!

We used synthetic topologies to study \dempin.
To create each topology, we start with a ring graph and then connect each node to a varying number of its nearest neighbors.
%
The results (\figref{f:dp_path_length}) indicate that the performance gap grows with the (average) shortest path length (fewer connections across nearest neighbors = longer shortest paths).
Intuitively, if the shortest path lengths are longer on average, \dempin will use the capacity on more edges to route the small demands.
This reduces the available capacity to route rest of the demands.

\begin{figure}[t!]
	\centering
	\subfigure[Gap vs. the threshold value for {\dempin}.]{
		\centering
		\includegraphics[width=0.89\linewidth]{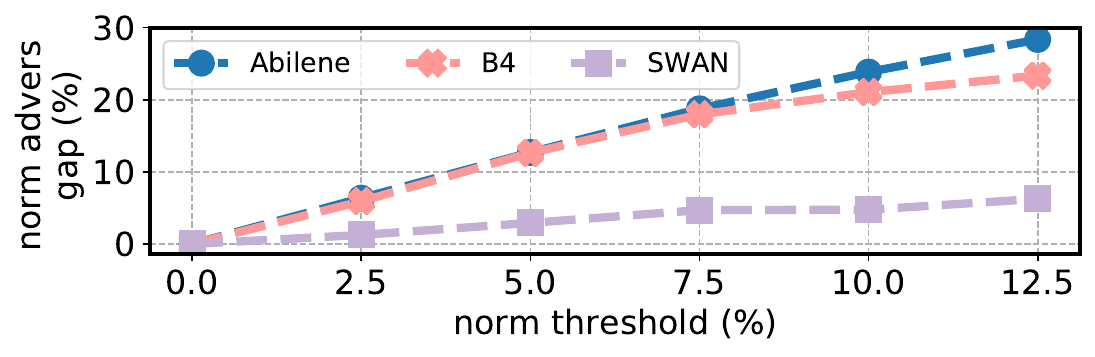}
		\label{f:dp_thresh}
	}
	\\ \vspace{-2mm}
	\subfigure[Gap vs. connectivity.]{
		\centering
		\includegraphics[width=0.89\linewidth]{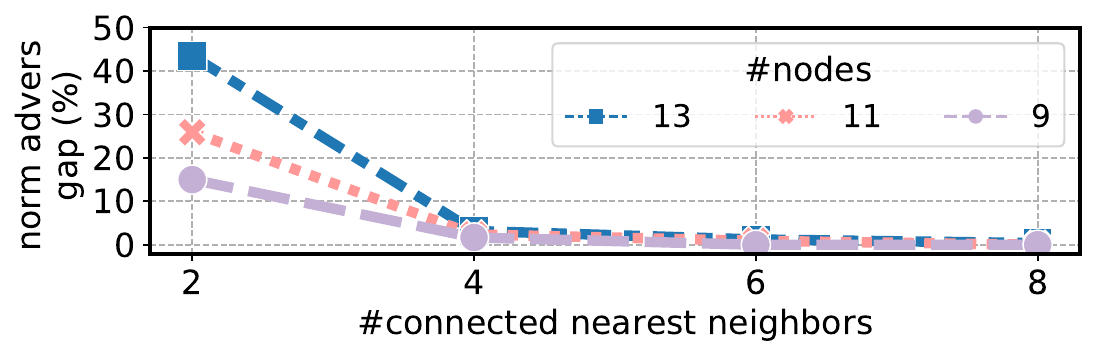}
		\label{f:dp_path_length}
	}
	\caption{\dempin's performance gap increases with the threshold and decreases with the connectivity.
		\label{f:maxgap_vs_params_dp}}
\end{figure}

\parae{Adversarial input patterns for \pop.}
Since \pop is a random heuristic, we search for inputs that maximize the expected gap.
We approximate this expectation by an empirical average over $n$ random partition samples.
Then, we check whether the adversarial inputs \metaopt finds can generalize by testing them on on 100 other random instances.
When \metaopt uses a small number of samples to estimate the expected gap, the adversarial inputs overfit.
We can improve its ability to generalize if we increase the number of samples but this hurts scalability.
%
We find $n=5$ is a sweet spot that permits scaling without overfitting.

As we increase the number of \pop partitions, the performance gap of \pop increases because each partition (1) gets a smaller fraction of each edge's capacity, and (2) has less information about the global state.
We can reduce this gap if we increase the number of paths as it helps each partition better allocate the fragmented capacity.

\begin{figure}[t!]
	\centering
	\subfigure[Gap vs. instances to approximate the expected value.]{\centering\includegraphics[width=0.89\linewidth]{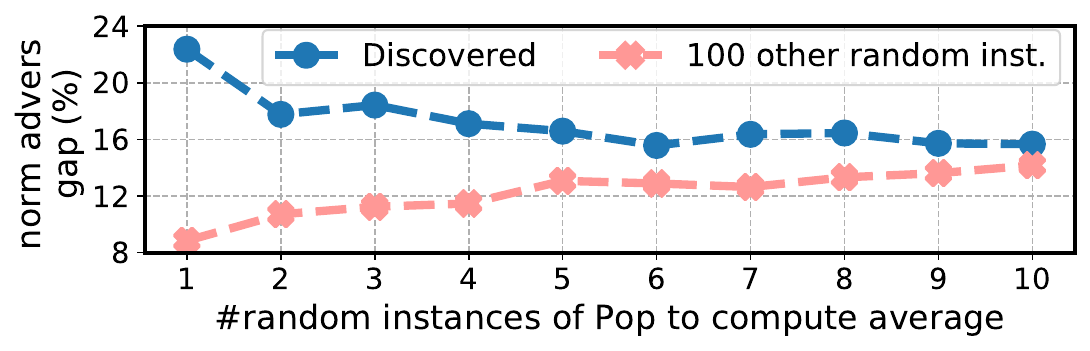}\label{f:expected_pop}}
	\\
	\subfigure[Gap vs. \#paths and \#partitions for avg {\pop}.]{\centering\includegraphics[width=0.89\linewidth]{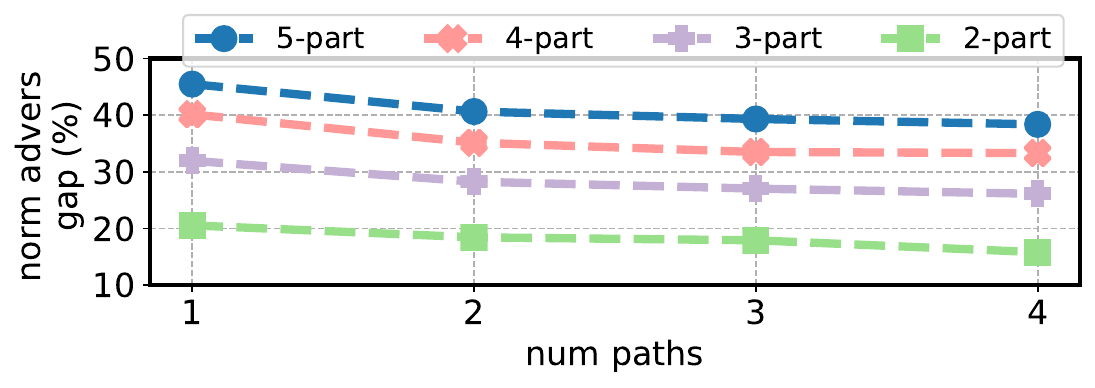}\label{f:pop_vs_numpartitions}}
	\caption{The performance gap of \pop on B4 when we vary: (a) \#instances to approximate average; and (b) \#paths and \#partitions.}
\end{figure}




\begin{figure}[t]
	\centering
	\subfigure[Impact of adding locality constraint on gap and density]{
		\centering
		{\footnotesize
			\begin{tabular}{c l c c }
				
				Heu & Constraint & Density & Gap \\ \hline
				\multirow{2}{*}{{\dempin}} & ~--~ & 54.06\% & 33.9\% \\ 
				& locality (distance of large demands $\leq 4$) & 12.03\% & 33.4\% \\\hline
				\multirow{2}{*}{{\pop}} & ~--~ & 16.14\% & 20.76\% \\ 
				& locality (distance of large demands $\leq 4$)& 4.74\% & 20.70\% \\ \hline
			\end{tabular}
		}
	}  
	\\
	\subfigure[Impact of adding locality constraint on {\dempin}]{
		\centering
		\includegraphics[width=0.89\linewidth]{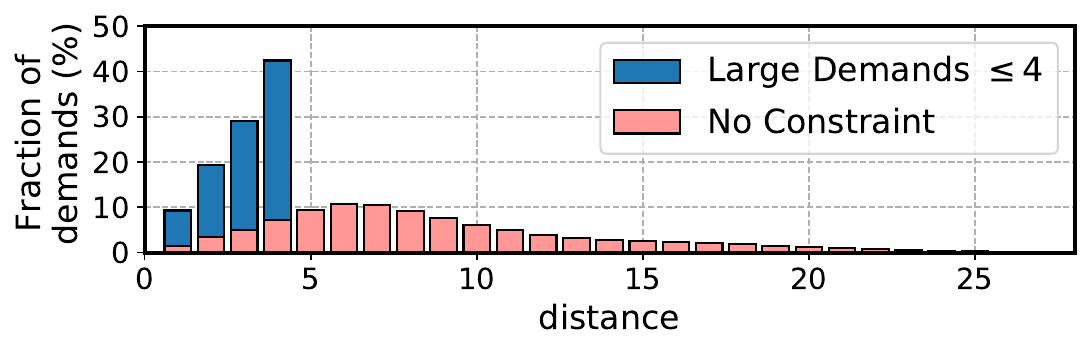}
	}
	\\
	\subfigure[Impact of adding locality constraint on {\pop}]{
		\centering
		\includegraphics[width=0.89\linewidth]{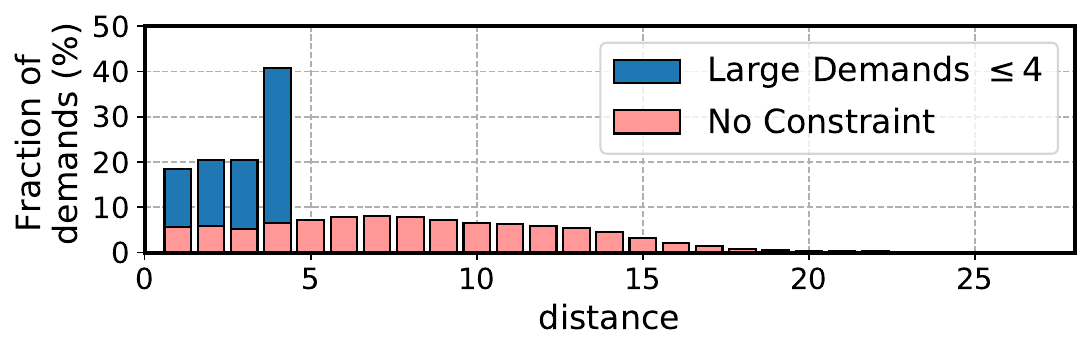}
	}
	\caption{We use \metaopt to find practical adversarial inputs on Cogentco. In both cases, the constraint on the input space led to more local and sparser adversarial inputs. \label{f:real-constr:locality}}
\end{figure}

\parae{Modified-\dempin.}
\metaopt shows \dempin has higher performance gap when nodes have larger average distance.
We use this insight to modify {\dempin} and design a better heuristic (modified-\dempin): we only pin a demand if it is below the threshold and {\em it is between nodes that are less than $k$ hops apart}.
The user specifies $k$.
The new heuristic routes small demands between far apart nodes optimally and leaves more capacity for other demands.
Its performance gap is $12.5\times$ smaller than \dempin for $T_d=1\%$ and $k = 4$.

As we increase the distance threshold, modified-\dempin can pin more demands (better scalability) but at the cost of higher performance gap.
{\metaopt} can help operators adjust the parameter $k$ based on their needs.

Modified-{\dempin} has another benefit: we can use a higher demand threshold $10$x-$50$x and maintain the same gap as the original {\dempin}.
We show this by fixing the gap at $5\%$ and computing the maximum threshold each method can admit using \metaopt (\figref{f:modified-dp:gap-fix}).
Operators can leverage this to pin more demands when small demands exhibit strong locality~\cite{ncflow}.

\begin{figure}[t]
	\centering
	\subfigure[Maximum threshold such that discovered gap $\sim 5\%$.]{
		\centering
		{\footnotesize
			\begin{tabular}{ c c c }
				
				Heuristic & Distance & Threshold wrt avg link cap \\ \hline
				{\dempin} & ~--~ & 0.1\% \\ \hline
				\multirow{2}{*}{modified-{\dempin}} & $\leq 6$ & $1\%$ (10x) \\
				& $\leq 4$ & $5\%$ (50x) \\ \hline
			\end{tabular}
		}
		\label{f:modified-dp:gap-fix}
	} 
	\\
	\subfigure[{\dempin} vs. modified-{\dempin}]{
		\centering
		\includegraphics[width=0.89\linewidth]{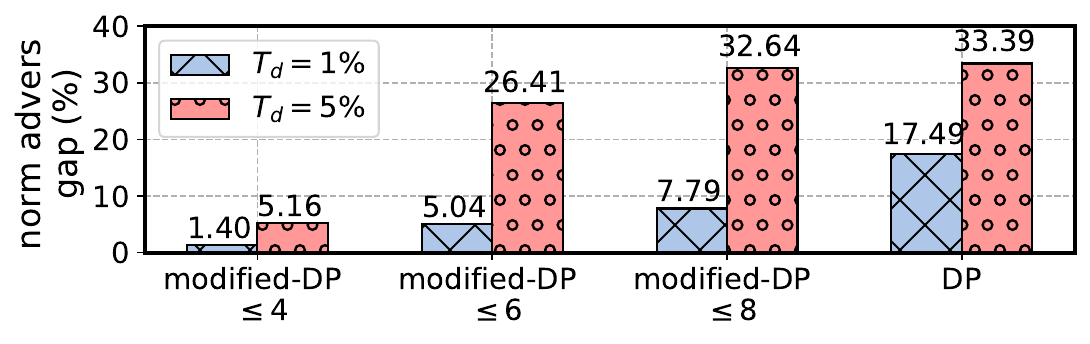}
		\label{f:modified-dp:thresh-fix}
	}
	\caption{We propose modified-{\dempin} based on insights from \metaopt. It only pins small demands between near nodes, is more resilient, and pins more demands with the same gap\label{f:modified-dp}.}
\end{figure}

\parae{Meta-\pop-\dempin.}
This meta-heuristic runs \pop and \dempin in parallel and selects the best solution for each input.
The two heuristics appear to have distinct adversarial inputs: \dempin under-performs when distant pairs have small demands and \pop when large demands that go through the same link end up in the same partition.
We expected combining them would reduce the performance gap significantly compared to each one.
But \metaopt shows the new heuristic only improves the performance gap by $6\%$ on the Cogentco topology.
It finds inputs where small demands are between distant pairs (adversarial to \dempin) and large demands are between nearby nodes that end up in the same partition (adversarial to \pop).

\subsection{Heuristics for Vector Bin Packing}
\label{s:eval:vbp}


We use \metaopt in vector bin packing (\vbp) to (a) derive performance gaps  that verify known results for \ffd in one dimension, and (b) prove a new property for \ffd in 2 dimensions.

\parab{Finding performance gaps.}
Performance in \ffd is measured by the number of bins it needs to fit a given number of balls and
the performance gap is a lower bound on the worst case approximation ratio (see~\S\ref{s:background}).
After decades of theoretical studies on 1d-\ffd~\cite{ffd-1d-tight,ffd-1d-1973,ffd-1d-1985,ffd-1d-1991}, the work in~\cite{ffd-1d-tight} established the tight bound $\ffd(\mathcal{I}) \leq \frac{11}{9} \eqopt(\mathcal{I}) + \frac{6}{9}$ for any $\mathcal{I}$ where \opt is the optimal.
To prove tightness, the authors craft a careful example where $\eqopt(\mathcal{I}) = 6$ bins and $\ffd(\mathcal{I})=8$.
\metaopt found the same example when we constrained its inputs to $\eqopt(\mathcal{I}) = 6$ and proved \ffd needs $8$ bins in the worst-case.

The proof in~\cite{ffd-1d-tight} assumes: (1) $\mathcal{I}$ can have an unlimited number of balls; and (2) the balls in $\mathcal{I}$ can have any size (even 0.00001$cm^3$!).
%
%
This is not always the case in practice: \eg when packing jobs (balls) in machines (bins)~\cite{jobpacking}, we often know \textit{apriori} an upper bound on the number of jobs or the quantization levels for resource requirements.
We can incorporate such constraints and ensure \metaopt finds practical performance gaps.
As \tabref{tab:1d-ffd} shows, when we constrain the number of balls and the ball sizes, \metaopt finds adversarial inputs that produce a tighter bound compared to~\cite{ffd-1d-tight}.



\begin{table}[t!]
	\centering
	{\footnotesize
		\begin{tabular}{c c c}
			max \#balls & ball size granularity & {\sf{FFD}}($\mathcal{I}_{{\metaopt}}$)\\ \hline
			20 & 0.01x &  8 \\
			20 & 0.05x &  7\\
			14 & 0.01x &  7\\
			\hline
		\end{tabular}
	}
	\caption{\metaopt finds slightly tighter bounds when constraining the number and size of balls. For \sf{OPT}($\mathcal{I}$)=\textbf{6}, the tightest known theoretical bound for \ffd~\cite{ffd-1d-tight} is \textbf{8}. This assumes the input can have an unlimited number of balls and that ball can have any size.\label{tab:1d-ffd}}
	\vspace{1mm}
	{\footnotesize
		\begin{tabular}{ c c c c c }
			
			\multirow{2}*{${\sf OPT}(\mathcal{I})$ }& \multicolumn{2}{c}{\metaopt} & \multicolumn{2}{c}{theoretical bound~\cite{VBP-Heuristics}} \\ \cline{2-5}
			& \#balls & approx ratio & \#balls & approx ratio \\ \hline
			2 & 6 & 2.0 & 4 & 1.0 \\
			3 & 9 & 2.0 & 12 & 1.33 \\
			4 & 12 & 2.0 & 24 & 1.5 \\
			5 & 15 & 2.0 & 40 & 1.6 \\
			\hline
		\end{tabular}
	}
	\caption{\metaopt finds adversarial examples with tighter approx. ratio for 2d-\ffd than the best known theoretical bound~\cite{VBP-Heuristics}.\label{tab:2d-ffd}}
		
\end{table}

%

\parab{Proving properties.}
While multi-dimensional \ffd is widely used in practice~\cite{protean,tetris,synergy}, its theoretical guarantees are less well understood.
Recently, \cite{VBP-Heuristics} crafted an example where 2-dimension \ffdsum uses $\alpha$ times more bins than the optimal, $\alpha \in [1,2)$, with $\alpha \to 2$ as the optimal tends to infinity.
In other words, for a finite size problem $\alpha$ is strictly less than 2.

\metaopt found adversarial inputs with $\alpha=2$ \emph{for every problem size we considered} (Table~\ref{tab:2d-ffd}).
For example, when the optimal uses 4 bins, \metaopt finds an adversarial input  with $12$ balls causing \ffdsum to use 8 bins.
In constrast, \cite{VBP-Heuristics} uses $24$ balls and only achieves approximation ratio of $1.5$.

The pattern we observe in adversarial inputs from \metaopt helped us prove the following theorem which establishes an approximation ratio of at least $2$ for \ffdsum. \secref{apendix:proof:ffdsum} contains the detailed proof.




\begin{theorem}
	\label{th:ffdsum:approx-ratio}
	In 2-dimensional \vbp, for any $k > 1$, there exists an input $\mathcal{I}$ with $OPT(\mathcal{I}) = k$ and $FFDSum(\mathcal{I}) \geq 2k$. 
\end{theorem}



\subsection{Heuristics for Packet Scheduling}
\label{s:packet-scheduling-eval}

We have used \metaopt to compare performance of \sppifo with both optimal (\pifo~\cite{pifo}) and another heuristic \aifo~\cite{aifo}.
Unlike \sppifo, \aifo uses a single queue but adds admission control based on packet ranks to approximate \pifo.

\begin{figure}[t]
	\includegraphics[width=0.91\linewidth]{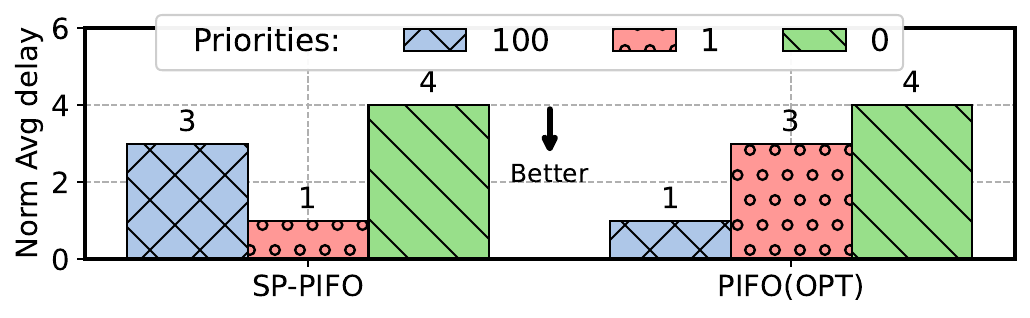}
	\caption{\sppifo can delay the highest priority packet (rank = 0) by $3\times$. Here, we show the average delay of packets with the same rank normalized by the average delay of the highest priority packets in \pifo. We assume packets have ranks between $0-100$ and the queues can admit all the packets (similar to \sppifo). For the case where $10K$ packets arrive at the same time and the queues drains at $40$ Gbps, the average delay for the highest priority packets in PIFO is $0.74ms$ (the performance gap in this figure is independent of the number of packets).\label{f:sp-pifo-pifo}}
\end{figure}

\parab{Finding performance gaps.}
We use the average delay of packets weighted by their priorities to compare \pifo and \sppifo.
\metaopt discovers packet traces where \sppifo fails to prioritize packets correctly and incurs $3\times$ higher delays for high priority packets (\figref{f:sp-pifo-pifo}) than \pifo!

We also use \metaopt to understand when, and by how much, \sppifo is better than \aifo, and when it is worse.
Unlike \sppifo, \aifo is designed for shallow-buffered switches and its admission control can drop packets.
For a fair comparison, we assume both heuristics use the same switch buffer size, and we split the buffer evenly across \sppifo queues.
With limited buffers, these algorithms may drop packets and we need to also compare the impact of their respective drop rates when we compare their performance impact on packets with different priorities.
%
We borrow a metric from \sppifo to do this: we count $k$ priority inversions when a packet is inserted in a queue with $k$ lower priority packets (even if the queue is full and the packet would have been dropped).
We found (\tabref{t:compare-sp-ai}):

\parae{\aifo sometimes outperforms \sppifo} because (1) it has one large queue instead of $n$ smaller ones~---~\sppifo drops many packets when faced with a burst of packets with the same priority because it assigns them to a single, smaller, queue;
and (2) \sppifo does not have admission control~---~we can create an adversarial pattern where we have lower priority packets arrive right before a group of high priority ones to make \sppifo admit the lower priority packets and drop the higher priority ones.

\begin{table}
	\centering
	\begin{footnotesize}
	\begin{tabular}{ccc}
		\multirow{2}{*}{\metaopt max objective} & \multicolumn{2}{c}{\#priority inversions} \\ \cline{2-3}
		& \sppifo~\cite{sp-pifo} & \aifo~\cite{aifo} \\ \hline
		$\aifo() - \sppifo()$ & 6 & 37 \\
		$\sppifo() - \aifo()$ & 24 & 11 \\\hline
	\end{tabular}
	\end{footnotesize}
	\caption{\metaopt discovered cases where one heuristic outperforms the other and vice versa. We show the number of priority inversions for \sppifo and \aifo on a trace from \metaopt with 18 packets. The total queue size is 12 and \sppifo has 4 queues.\label{t:compare-sp-ai}}
\end{table}

\parae{But \sppifo also sometimes outperforms \aifo} because (1) \aifo lacks a sorting mechanism which can cause high priority packets to get delayed behind lower priority ones, and (2) \aifo depends on an estimate of the distribution based on the most recent window of packets.
\metaopt found traces in which a few packets with completely different priorities compared to others can disrupt \aifo's distribution estimate!

\parab{Proving properties.}
{\metaopt} shows that adversarial inputs to \sppifo exhibit significant priority inversions.
Using this observation, we proved a lower-bound on the worst-case performance gap, in terms of priority-weighted average delay, between \sppifo and \pifo (see \secref{a:s:sp-pifo-theorem}):

\begin{theorem}
	\label{th:sp-pifo:avg-gap}
	For any number of packets $N$, integer priorities between $0-R_{max}$ and $q \ge 2$ queues, there exists a sequence of packets $\mathcal{I}$ where the
	difference between the weighted average packet delay that results from \sppifo is 
	\begin{equation}
		(R_{max} - 1)(N - 1 - p)p~~~~\text{where}~~p = {\left\lceil(N+1)/2\right\rceil}
	\end{equation}
	worse compared to PFIO.
\end{theorem}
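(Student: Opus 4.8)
The plan is to give a constructive existence proof: I would exhibit one explicit packet sequence $\mathcal{I}$ that forces \sppifo into a maximal number of priority inversions, then directly compare the priority-weighted average delay it induces against that of \pifo. Because all packets can be assumed to arrive simultaneously and the queues drain at a fixed rate (as in \figref{f:sp-pifo-pifo}), a packet's delay is, up to a common constant, just its position in the dequeue order; this reduces the whole argument to reasoning about the two \emph{orderings} that \sppifo and \pifo produce. The weight attached to each packet is its priority, so one unit of extra delay imposed on a packet whose priority differs from a low-priority packet by $(R_{max}-1)$ contributes $(R_{max}-1)$ to the gap, which is where the leading factor in the bound originates.

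For the construction I would use only two effective priority levels separated by the maximum spread: a block of $N-1-p$ low-priority packets (rank $R_{max}$) and a block of $p$ high-priority packets (rank $1$), preceded by a single priming packet whose only role is to fix the queue-priority boundaries of \sppifo so that the subsequent high-priority packets are admitted into a queue that already holds the low-priority block. The ordering in $\mathcal{I}$ sends the priming packet and the low-priority block before the high-priority block, so the three blocks contain $1 + (N-1-p) + p = N$ packets in total. Under \pifo the dequeue order is fully sorted, hence every high-priority packet is served ahead of every low-priority one and its delay is minimal; this is the baseline. Under \sppifo I would trace the admission rule (scan queues from lowest to highest priority, admit to the first whose threshold permits the packet, and refresh all thresholds only when none does) to show that the high-priority block lands \emph{behind} the already-enqueued low-priority block in the same FIFO queue. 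Consequently each of the $p$ high-priority packets is delayed by the $N-1-p$ low-priority packets sitting ahead of it, and summing the weighted extra delay over the $p$ high-priority packets yields exactly $(R_{max}-1)(N-1-p)p$.

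The two steps I expect to be routine are the \pifo baseline (it is just the sorted order) and the final weighted summation (a product of the three counts). The delicate part, and the main obstacle, is the \sppifo trace: I must argue that the queue-priority update rule never refreshes the thresholds in a way that lets a high-priority packet escape to an empty or better-ordered queue, and that this holds \emph{for every} $q \ge 2$ rather than just two queues. Intuitively more queues should help \sppifo, so the construction has to funnel the high-priority block into the one queue already polluted by the low-priority block; I would ensure this by giving the whole high-priority block a single common rank so that the scan-and-admit rule deterministically routes all of them to the same queue, and by choosing the priming packet's rank so that the relevant threshold is set before the high-priority block arrives and is never updated thereafter (no admission failure ever occurs). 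Verifying the exact value $p=\lceil (N+1)/2 \rceil$, that is, that splitting the $N$ packets into blocks of this size maximizes the product $(N-1-p)p$ subject to the ordering constraints, is the remaining bookkeeping and amounts to maximizing a downward parabola in the block size.
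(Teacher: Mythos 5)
There is a genuine gap: your construction cannot be realized under \sppifo's admission rules, because you have the direction of the trap backwards. In \sppifo, a packet is admitted to a queue only if its rank is at least the queue's threshold, and push-up sets the threshold to the last admitted packet's rank. Your rank-$R_{max}$ low-priority block is always admitted by the lowest-priority queue (its threshold can never exceed $R_{max}$), and after admission that queue's threshold \emph{is} $R_{max}$. A subsequent rank-$1$ packet therefore can never join that queue: it is either admitted by some strictly higher-priority queue with a lower threshold, or, if every queue rejects it, push-down inserts it into the \emph{highest}-priority queue. In every case the high-priority block lands in a strictly higher-priority queue than the low block, and since dequeueing is strict priority, the high block drains first --- the ordering is essentially correct and the gap is (near) zero, not $(R_{max}-1)(N-1-p)p$. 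No choice of priming packet fixes this: at best a priming packet raises a higher queue's threshold so that the rank-$1$ packets queue behind that \emph{single} packet, delaying each by one slot, which is nowhere near the claimed bound. Your own worry ("the queue-priority update rule never refreshes the thresholds in a way that lets a high-priority packet escape") is exactly the step that fails, and it fails already for $q=2$.

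The paper's construction inverts the roles. It sends the $p$ \emph{highest}-priority packets (rank $0$) \emph{first}: since all thresholds start at $0$ and the scan begins at the lowest-priority queue, they are all trapped in the lowest-priority queue. One packet of rank $R_{max}$ then follows into that same queue, pushing its threshold up to $R_{max}$. The remaining $p^{*}=N-1-p$ packets of rank $R_{max}-1$ are now rejected by the lowest-priority queue and admitted into the next (higher-priority) queue --- for any $q\ge 2$ --- so strict-priority dequeueing serves all $p^{*}$ second-lowest-priority packets before the $p$ highest-priority ones. Writing out the priority-weighted delay sums for \pifo and \sppifo and subtracting gives exactly $(R_{max}-1)pp^{*}=(R_{max}-1)(N-1-p)p$. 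Your reduction to dequeue orderings, the \pifo baseline, and the final parabola-maximization bookkeeping are all fine and match the paper; it is the adversarial trace itself that must be rebuilt along these lines, exploiting the fact that in \sppifo inversions arise from high-priority packets being \emph{stranded in a low-priority queue} while later mid-rank packets overtake them through a higher-priority queue, never from low-rank packets queuing behind high-rank ones in the same FIFO.
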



\parab{Designing better heuristics.}
\sppifo uses the same set of queues to schedule packets with a wide range of priorities.
\metaopt found adversarial inputs in which a large number of low priority packets arrive before a high priority packet.
We find \sppifo under-performs when the difference between packet priorities is large.
\theoremref{th:sp-pifo:avg-gap} also confirms this, as the gap is proportional to $R_{max}$.

We evaluated a Modified-\sppifo, in which we reduced the range of packet priorities that can compete with each other:
we created $m$ queue groups where each group served a fixed priority range.
\sppifo runs on queues within a group.
This modified version can reduce the gap of \sppifo by $2.5\times$.



\subsection{Evaluating \metaopt}
\label{s:benchmarking-metaopt}

We show \metaopt can find solutions faster than other baseline search methods and helps users describe \opt and \heur in a compact way.
We also show how our various design choices help.

\parab{How fast \metaopt discovers a performance gap.}
Alternatives to \metaopt include random search, hill climbing~\cite{hill_climbing}, and simulated annealing~\cite{simulated_annealing}.
Random search repeatedly picks new random inputs and returns the one with maximum gap.
Hill climbing (HC) and simulated annealing (SA) use information from past observed inputs to guide the search (more details in~\xref{appendix:black-box}).

We compare them in \figref{f:method_gap_vs_latency:dp}.
{\metaopt} is the only method that consistently discovers substantially larger gaps over time, while other techniques get stuck in local optima and improve only slightly even after many hours (Fig.~\ref{f:dp-0.01-baseline:b4},~\ref{f:dp-0.01-baseline:Cogentco}).



\begin{figure*}[t!]
	\centering
	\subfigure[Gap vs. latency for B4 + {\dempin} ($T_d=1\%$)]{
		\centering
		\includegraphics[width=0.43\linewidth]{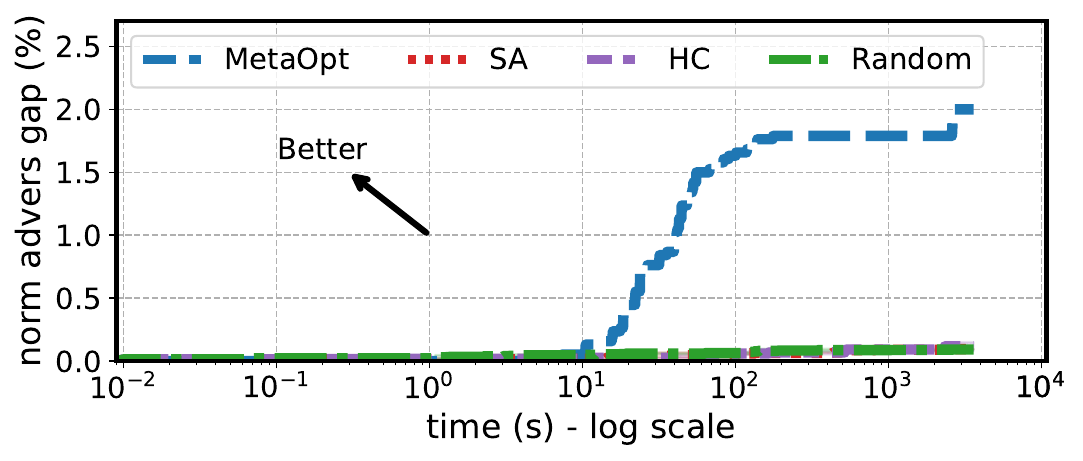}
		\label{f:dp-0.01-baseline:b4}
	}\hfill
	\subfigure[B4 + {\dempin} ($T_d=5\%$)]{
		\centering
		\includegraphics[width=0.215\linewidth]{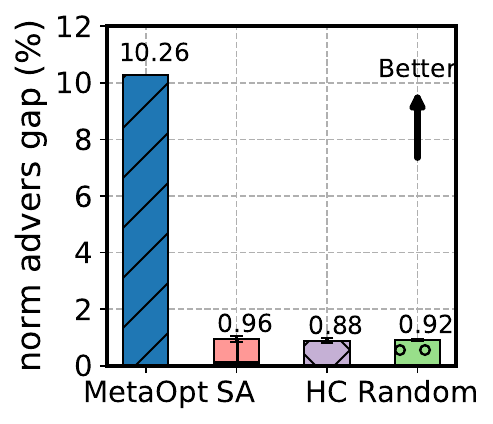}
		\label{f:dp-0.05-baseline:b4}
	}\hfill
	\subfigure[B4 + average gap of {\pop}]{
		\centering
		\includegraphics[width=0.215\linewidth]{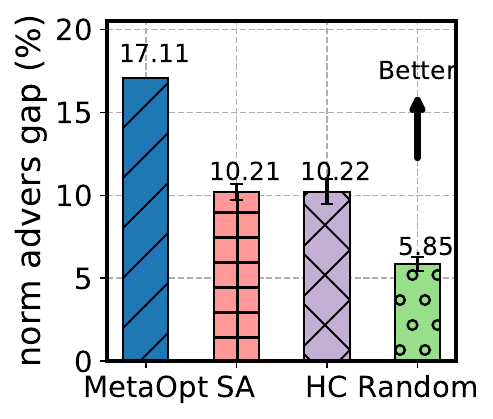}
		\label{f:pop_quality}
	}
	\\
	\subfigure[Gap vs. latency for Cogentco + {\dempin} ($T_d=1\%$)]{
		\centering
		\includegraphics[width=0.43\linewidth]{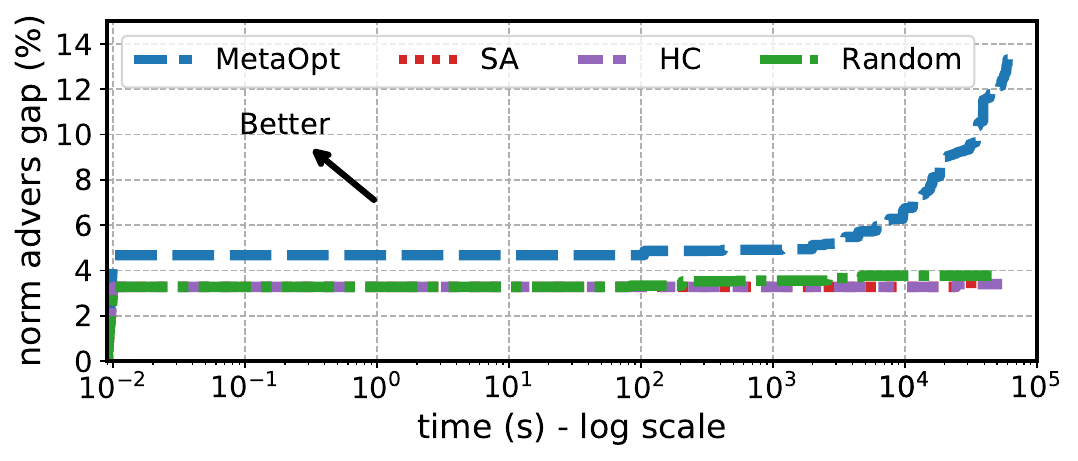}
		\label{f:dp-0.01-baseline:Cogentco}	
	} \hfill
	\subfigure[Cogentco + {\dempin} ($T_d=5\%$)]{
		\centering
		\includegraphics[width=0.215\linewidth]{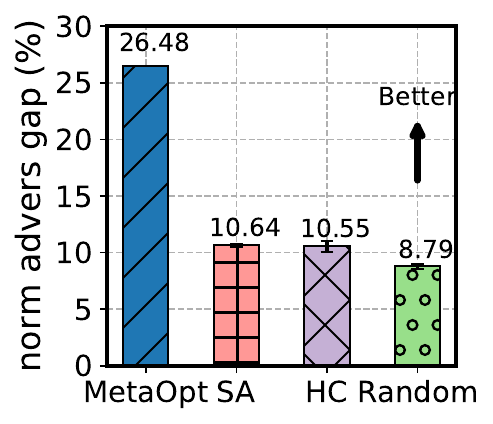}
		\label{f:dp-0.05-baseline:Cogentco}	
	} \hfill
	\subfigure[Cogentco + average gap of {\pop}]{
		\centering
		\includegraphics[width=0.215\linewidth]{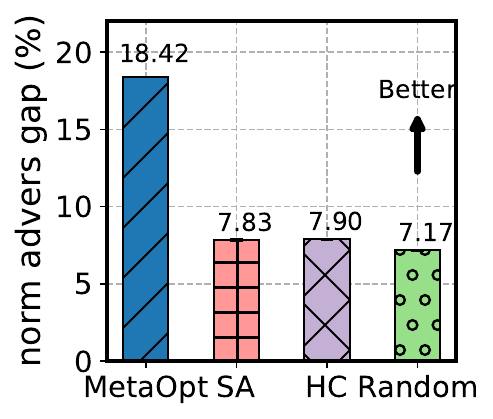}
		\label{f:pop-baseline:Cogentco}	
	}
	\caption{{\metaopt} is faster and finds larger gaps between {\sf OPT} and \pop or \dempin on the Cogentco and B4. For fair comparison, we report the gap relative to the total capacity and use only one thread to run each methods (SA = Simulated Annealing, HC = Hill Climbing).\label{f:method_gap_vs_latency:dp}}
\end{figure*}

\metaopt finds larger gaps ($1.7$x~-~$17$x compared to the next best).
The baselines fail since they ignore the details of the heuristic and treat it as black-box.
{\metaopt} uses its knowledge of the heuristic and the topology to guide the search.
%

\begin{figure}[t]
\includegraphics[width=0.9\columnwidth]{./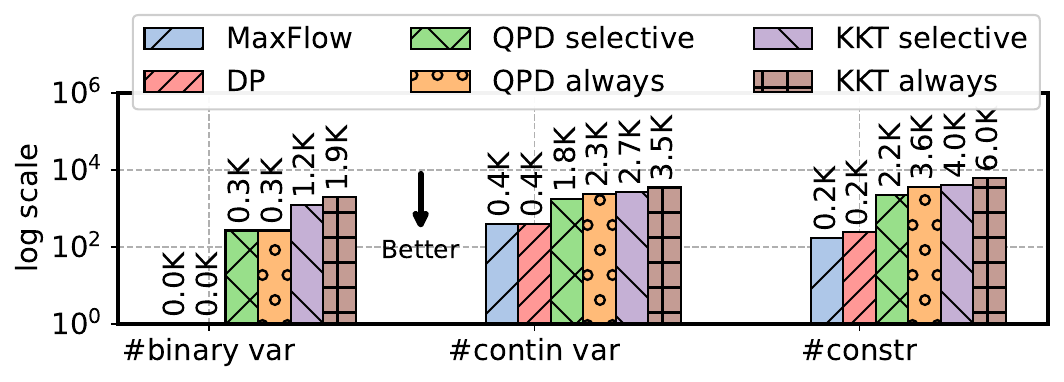}
\caption{Users specify \dempin and the optimal in \metaopt. We show how complex these specifications and the rewrites are in terms of the number of variables and constraints.}
\label{fig:complexity}
\end{figure}

\parab{Input and rewrite complexity.}
Users specify \opt and \heur in \metaopt and it automatically applies selective rewrites (\secref{s:automatic-rewrites}) to scale better.
We evaluate how complex these specifications and re-writes are in terms of the number of binary and continuous variables and the number of constraints they contain (\figref{fig:complexity}).
In general, these quantities impact the performance of a solver, and lower values improve solver performance.

We use \dempin as an example (results for \pop in \figref{fig:complexity:pop}) to highlight three features in {\metaopt}'s design (\figref{fig:complexity}).
%
The three metrics show the users inputs are more compact than the rewritten optimization; \eg they have a fifth of the constraints and half the number of continuous variables.
This quantifies how {\metaopt}'s automatic rewrites can reduce the programmer's burden.
Selective rewrites are important: we can use them to reduce the number of constraints (2.2K vs 3.6K for \qpd) and continuous variables (1.8K vs 2.3K for \qpd) compared to when we always rewrite the bi-level optimization.
%
We can produce more compact specifications (with less binary and continuous variables and fewer constraints) through \qpd compared to KKT even with selective rewrites.
This explains why it helps {\metaopt} scale.

\parab{The impact of partitioning.}
\metaopt partitions the problem to find larger gaps faster than both (non-partitioned) quantized primal-dual and KKT, even on medium-sized topologies (\figref{f:cluster-ablation:comparison}).
For larger topologies, neither KKT nor primal-dual can find large gaps without partitioning.

As we increase the number of partitions, \metaopt scales better and finds larger gaps until it eventually plateaus ($10$ for Cogentco in~\figref{f:cluster-ablation:num-clusters}).
We can slightly improve the gap if we double the solver timeout.

The inter-cluster step in partitioning is important, especially for heuristics that under-perform when demands are between distant nodes such as \dempin (\figref{f:cluster-ablation:inter-cluster}).
The partitioning algorithm also impacts the gap we find (\figref{f:cluster-ablation:graph}).

\begin{figure}[t!]
	\subfigure[KKT vs Quantized primal-dual vs w partitioning on Uninett2010]{
		\centering
		\includegraphics[width=0.91\linewidth]{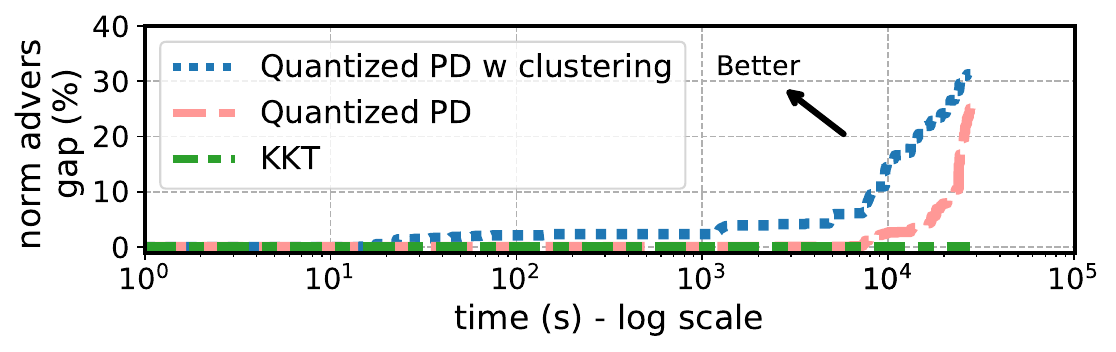}
		\label{f:cluster-ablation:comparison}
	}
	\subfigure[Number of partitions and solver timeout on Cogentco]{
		\centering
		\includegraphics[width=0.89\linewidth]{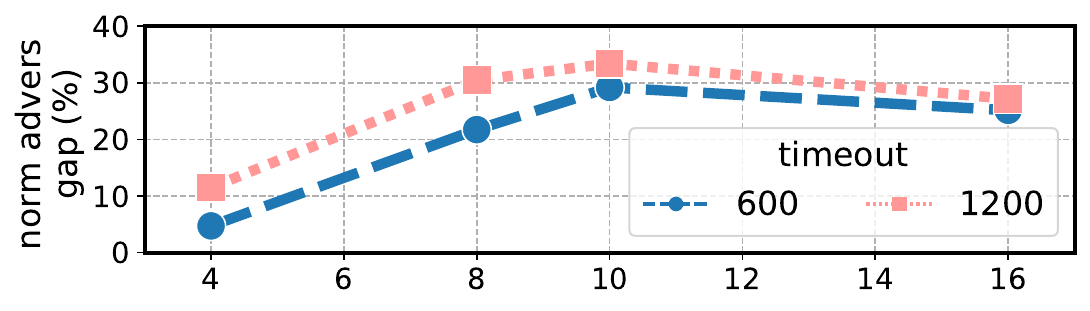}
		\label{f:cluster-ablation:num-clusters}
	}
	\subfigure[Inter-cluster on Cogentco]{
		\centering
		\includegraphics[width=0.43\linewidth]{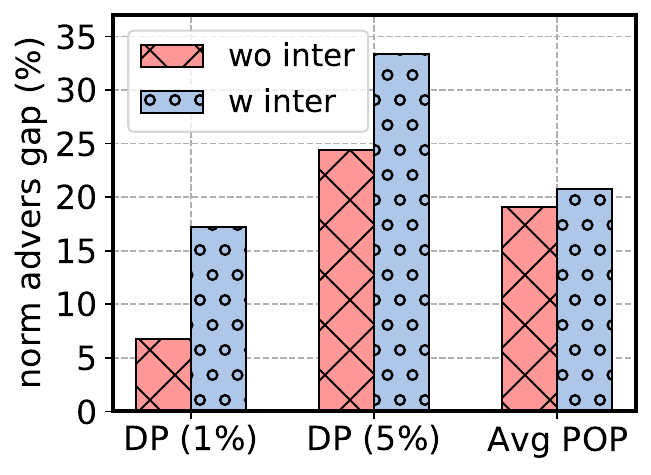}
		\label{f:cluster-ablation:inter-cluster}
	} \hfill
	\subfigure[Graph partitioning on Cogentco]{
		\centering
		\includegraphics[width=0.45\linewidth]{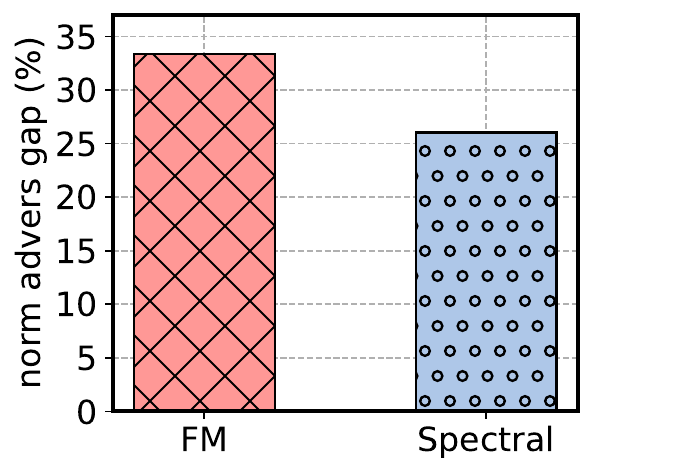}
		\label{f:cluster-ablation:graph}
	}
	\caption{We investigate the impact of partitioning in \metaopt's results. It helps \metaopt find larger gaps faster. \label{f:cluster-ablation}}
\end{figure}


				

			


\parab{The impact of quantization.}
%
%
To quantify, we compare the relative difference between the gap from quantized primal-dual and KKT (which does not use quantization). 
We found quantization has little impact on solution quality: $4\%$ for \dempin and 0 for \pop on B4 (we can not scale KKT to larger topologies).
For \pop, we use two quantiles: 0 and the max demand.
If a demand $d$ in an uncongested partition falls in between these values, forcing the demand to zero cannot decrease the gap: the rewrite's throughput would drop by $d$, and the optimal throughput by some value between 0 and $d$.
A similar argument applies to the congested case.

We use three quantiles for \dempin: 0, the threshold, and the max demand.
For a high enough threshold, quantized primal-dual may avoid assigning the threshold value to demands between distant nodes to not violate capacity constraints, whereas KKT can assign any value, which causes the relative difference in solution quality.

\section{How to extend \metaopt's scope}
\label{sec::stackelberg}

We can analyze any heuristic that we can specify as an optimization or feasibility problem through \metaopt.
%
To expand the scope of heuristics we can analyze, we observe the dynamics of the problem \metaopt addresses resembles leader-follower games (Stackelberg equilibria~\cite{sGames}).
In such games, a leader maximizes their payoff and controls the inputs to one or more followers.
With this {\em fixed} input from the leader, the followers have to respond and optimize for their own payoff and decide outputs, which the leader does not control but influences their payoff.

These games apply to a variety of leader-follower combinations (\eg optimization-based and Bayesian) and there are a variety of techniques to find the equilibrium in such games~\cite{zhang2021bayesian, karwowski2023sequential, wang2021fast, bosansky2015sequence}.
Hence, in theory, we can use these techniques to analyze the performance gaps of a broader class of heuristics than \metaopt currently supports, as long as we have a leader-follower combination where we know how to compute the equilibrium.
This is future work.

\vspace{-0.7mm}
\section{Related Work}
\label{s:related}

To the best of our knowledge, no prior work finds {\em provable} adversarial inputs for heuristics that approximate optimal problems. Our techniques (e.g., big-M, convex rewrites, and generally translating the problem to one that is amenable to off-the-shelf solvers) are not per-se novel~\cite{bilevel_3,bilevel_2,chemKKT} but no other work has combined them in this way. We also extend them to randomized, conditional, and sequential non-convex heuristics. Without our changes, we could not apply existing solvers directly or quickly find large gaps.

Our qualitative results -- the optimality gap and hard examples for {\pop}, {\dempin}, {\ffd}, {\sppifo}, and {\aifo}~--~are novel. We also find and prove tighter bounds for the optimality gap of \ffd and \sppifo.

Our work is different from most prior techniques. Traditional algorithmic worst- or average-case analysis~\cite{Cormen,randalgo} are specific to an individual heuristic and must be applied case by case. We cannot do such analysis for some heuristics as they only find loose bounds or do not account for realistic input constraints (none exist for \dempin, \pop~\cite{pop}, or the heuristics in~\cite{shoofly,arrow}). 

Verification methods also seek inputs that violate an invariant on a given function~\cite{software-model-checking}; but these methods support \emph{statically-specified} invariants on safety or correctness. In contrast, we seek inputs that maximize the gap between the optimal and heuristic algorithms. 

Model checking and approaches based on SMT solvers~\cite{CCAC, NetPerfFormalMethods, goel2023quantitative, cca-starvation, HeuristicSynthesis} can search for adversarial example inputs that result in performance gaps greater than a fixed bound when users can encode both the optimal and heuristic implementations as pure feasibility problems. However, these approaches cannot handle two-level optimization problems, where the optimal or heuristic solutions themselves must be formulated as optimizations (\eg traffic engineering). 


Local search algorithms~\cite{hill_climbing,simulated_annealing} apply to any (potentially black-box) heuristic or optimal algorithm but the flip side of such generality is they are slow on large input spaces, get stuck in local optima, and fail to find practical inputs because they ignore the inner workings of the heuristic.


Recent work find malicious inputs to learned techniques~\cite{rl_adversary,self_driving}. However, none of these works find provably large gaps or even consider the optimal algorithm.
Other broadly related work include: ~\cite{cuckoo_break_without_knowing_hash_functions,cyber_physical_testing,raregan, surgeprotector}. 

Our partitioning approach is different from that of
recent work~\cite{ncflow,pop}.
\metaopt does not need to return a {\em feasible flow allocation} (where flows respect the capacity and path constraints) for either \opt or \heur but instead only has to find adversarial inputs~---~it does not need to combine the solutions of the sub-problems.
These sub-problems often fail to enforce global constraints which is why NCFlow and \pop are more complex.
\pop~\cite{pop} sacrifices quality and partitions demands separately to ensure the sub-problems enforce rigorous constraints.
\metaopt does not need such constraints because the result it produces is a choice of inputs for the problem and not the solution.
%
Notice a certain `coming full circle' aspect here: to quickly analyze the optimality gap of {\pop}, we use a similar (but not the same) partitioning.

This paper is an extended version of~\cite{metaopt}.
Compared to this workshop paper, we changed the methodology to improve generality and scalability, added helper functions for ease of use, added support for heuristics from VBP and packet scheduling, and did a more extensive evaluation.

\vspace{-0.9mm}
\section{Summary and Future Work}
\label{s:conc}

\metaopt is a heuristic analyzer for heuristics that can be posed as an optimization or a feasibility problem.
It can be used to find performance gaps at scale, prove lower bounds on worst-case gaps, and devise improvements to heuristics.
At its core, \metaopt solves a bi-level optimization problem.
To do this efficiently, it selectively rewrites heuristic specifications as a single-level optimization, and incorporates several scaling techniques.
Future work can include using it to evaluate and improve other heuristics, increasing its expressivity (\secref{sec::stackelberg}), and identifying infeasible inputs instead of adversarial ones.
 

\clearpage
\bibliographystyle{plain}
	\bibliography{./bibs/wanrisk}

\clearpage
\newpage
\appendix
\renewcommand\thefigure{A.\arabic{figure}}
\renewcommand\thesection{\Alph{section}}
\setcounter{subsection}{0}
\setcounter{section}{0}
\setcounter{figure}{0}
\setcounter{table}{0}
\renewcommand{\thetable}{A.\arabic{table}}
\renewcommand{\theequation}{\arabic{equation}}

\section{Details of Traffic Engineering}
\label{a:s:te}
\autoref{t:notation} summarizes our notation.

\begin{figure*}[t!]
	\begin{center}
		{\footnotesize
		\begin{tikzpicture}[->,>=stealth]
			\node[state, text width=6cm, label=DP Pseudocode] (A)
			{
				\begin{algorithmic}[t]
					\Input : $d_k$ requested rate of demand $k$ \EndInput
					\Input : $P_k$ paths for demand $k$ \EndInput
					\Input : $\hat{p}_k$ shortest path \EndInput
					\Input : $T_d$ demand pinning threshold \EndInput
					\ForAll{{$\text{demand}~k \in \mathcal{D}$}}
					\CIF{$d_k <= T_d$}
					\State \colorbox{myp}{$f^{\hat{p}_k}_k = d_k$}
					\EndIf
					\EndFor
					\State \colorbox{myb}{{\sf MaxFlow()}}
				\end{algorithmic}
			};
			
			\node[state,
			text width=7cm,
			right of=A,
			node distance=9.8cm,
			anchor=center, label={Modeling {\sf DP} in {\sf MetaOpt}}] (B)
			{%
				
				\begin{algorithmic}[t]
					\OuterVar : $d_k$ requested rate of demand $k$ \EndOuterVar
					\Input : $P_k$ paths for demand $k$ \EndInput
					\Input : $\hat{p}_k$ shortest path \EndInput
					\Input : $T_d$ demand pinning threshold \EndInput
					\ForAll{{$\text{demand}~k \in \mathcal{D}$}}
					\State \colorbox{myp}{${\sf ForceToZeroIfLeq}(d_k - f^{\hat{p}_k}_k, d_k, T_d)$}
					\EndFor
					\State \colorbox{myb}{{\sf MaxFlow()}}
				\end{algorithmic}
			};

			
			\begin{scope}[transform canvas={yshift=-0em}]
				\draw [->] (A) -- node[anchor=south,above] {${\sf MetaOpt}$} (B);
			\end{scope}
		\end{tikzpicture}
	}
	\end{center}
	\caption{The pseudo code for \dempin and how users can model it in \metaopt using the helper functions.\label{f:dp-helper-appendix}}
\end{figure*}

\begin{table}[t!]
	\centering
	{\footnotesize
		\renewcommand{\arraystretch}{1.5}
		\begin{tabular}{l p{2.15in}}
			{\bf Term} & {\bf Meaning}\\
			\hline
			\hline
			$\mathcal{V}, \mathcal{E}, \mathcal{D}, \mathcal{P}$ & Sets of nodes, edges, demands, and paths \\
			\hline
			$N, M, K$ & Number of nodes, edges, and demands, i.e., $N=|\mathcal{V}|, M=|\mathcal{E}|, K=|\mathcal{D}|$ \\\hline
			$c_e, p$ & {$c_e$: capacity of edge $e \in \mathcal{E}$\newline path $p$: set of connected edges}\\\hline
			$(s_k, t_k, d_k)$ & {The $k$th element in $\mathcal{D}$ has source and target nodes~($s_k, t_k \in \mathcal{V}$) and a non-negative volume~($d_k$)}\\\hline
			$\mathbf{f}, f_k^p$ & {$\mathbf{f}$: flow assignment vector with elements $f_k$ \newline $f_k^p$: flow for demand $k$ on path $p$} \\
			
		\end{tabular}
	}
	\caption{ Multi-commodity flow problems' notation.\label{t:notation}}
\end{table}

\subsection{Multi-commodity flow problem}
\label{sec::appendix-flow}

The optimal form of WAN-TE typically involves solving a multi-commodity flow problem. Given a set of nodes, capacitated edges, demands, and pre-chosen paths per demand, a flow allocation is feasible if it satisfies demand and capacity constraints. The goal is to find a feasible flow to optimize a given objective (e.g., total flow~\cite{ncflow}, max-min fairness~\cite{swan,b4}, or utility curves~\cite{bwe}). We define the feasible flow over a pre-configured set of paths as (see Table~\ref{t:notation})

{\small
	\begin{alignat}{4}
		&\multispan{4}\mbox{$\displaystyle \label{eq:feasible_flow} {\sf FeasibleFlow}(\mathcal{V}, \mathcal{E}, \mathcal{D}, \mathcal{P}) \triangleq \big\{\mathbf{f} \mid$}&\\
		&f_k = \sum_{p \in \mathcal{P}_k} f_k^p, & \forall k \in \mathcal{D} & \quad \text{(flow for demand $k$)}\nonumber\\
		&f_k \leq d_k, & \forall k \in \mathcal{D} & \quad \text{(flow below volume)}\nonumber\\
		&\sum_{k, p \mid p \in \mathcal{P}_k, e \in p} \!\!\!\!\! f_k^p \leq c_e, & \forall e \in \mathcal{E} & \quad \text{(flow below capacity)}\nonumber\\
		&f_k^p \geq 0 & \forall p \in \mathcal{P}, k \in \mathcal{D} & \quad \text{(non-negative flow)}\big\}\nonumber
\end{alignat}}
Among all the feasible flows, the optimal solution seeks to maximize the total flow across the network:

{\small
	\begin{align}
		\mbox{\sf OptMaxFlow}(\mathcal{V}, \mathcal{E}, \mathcal{D}, \mathcal{P}) \triangleq & \arg\max_{\mathbf{f}}\sum_{k \in \mathcal{D}} f_k\\
		\mbox{s.t.} \quad & \mathbf{f} \in {\sf FeasibleFlow}(\mathcal{V}, \mathcal{E}, \mathcal{D}, \mathcal{P}).\nonumber
\end{align}}


\subsection{More details on \dempin and \pop heuristics.}

\parab{Demand Pinning (\dempin)}~\cite{blastshield}. First, it routes all demands at or below a predefined threshold $T_d$ through their shortest path. It then jointly routes the rest of the demands optimally over multiple paths:

{\small
	\begin{alignat}{5}
		\multispan{4}\mbox{$\displaystyle \label{eq:demand_pinning} {\sf DemandPinning}(\mathcal{D}, \mathcal{P}) \triangleq \big\{\mathbf{f} \mid$} & \\
		& f_k^p & = & \begin{cases} d_k & \mbox{if }\ p\  \mbox{is shortest path in } P_k \\ 0 & \mbox{otherwise} \end{cases}, \,\, \forall k \in \mathcal{D}: d_k \leq T_d \big\}, \nonumber		
	\end{alignat}
	\vspace{-2mm}
}

We can pose {\dempin} as an optimization with constraints that route demands below the threshold on the shortest paths:

\begin{footnotesize}
	\vspace{-3mm}
	\begin{alignat}{3}
		\label{eq:dp_max_flow}
		\mbox{\sf DemPinMaxFlow}(\mathcal{V}, \mathcal{E}, \mathcal{D}, \mathcal{P}) \triangleq & \arg\max_{\mathbf{f}}\sum_{k \in \mathcal{D}} f_k & \\
		\mbox{s.t.} \quad & \mathbf{f} \in {\sf FeasibleFlow}(\mathcal{V}, \mathcal{E}, \mathcal{D}, \mathcal{P}) &\nonumber \\
		& \mathbf{f} \in {\sf DemandPinning}(\mathcal{D}, \mathcal{P}) &\nonumber
	\end{alignat}
	\vspace{-2mm}
\end{footnotesize}

\parab{Partitioned Optimization Problems (\pop).}~\cite{pop} \pop divides node pairs (and their demands) uniformly at random into partitions; assigns each partition an even share of edge capacities; and solves the original problem (e.g. the SWAN optimization~\cite{swan}) in parallel, once per partition. \pop results in a substantial speedup because it solves sub-problems that are smaller than the original (LP solver times are super-linear in problem sizes~\cite{boyd_co}) and it solves them in parallel.

{\small
	\begin{align}
		\label{eq:pop_max_flow}
		\mbox{\sf POPMaxFlow}&(\mathcal{V}, \mathcal{E}, \mathcal{D}, \mathcal{P}) \triangleq  \\
		& \bigcup_{\mbox{part. c}} \mbox{\sf OptMaxFlow}(\mathcal{V}, \mathcal{E}_c, \mathcal{D}_c, \mathcal{P}), \nonumber
	\end{align}
}

\noindent where $\cup$ is the vector union, the per-partition demands $\mathcal{D}_c$ are disjoint subsets of the actual demands drawn uniformly at random, and the per-partition edge list $\mathcal{E}_c$ matches the original edges but with proportionally smaller capacity.

%
%
%

\subsection{Formulation of \dempin and \pop}
\label{ss:convexify:te_1}
\label{sec:dempin_bigM}

\begin{figure}
	\includegraphics{./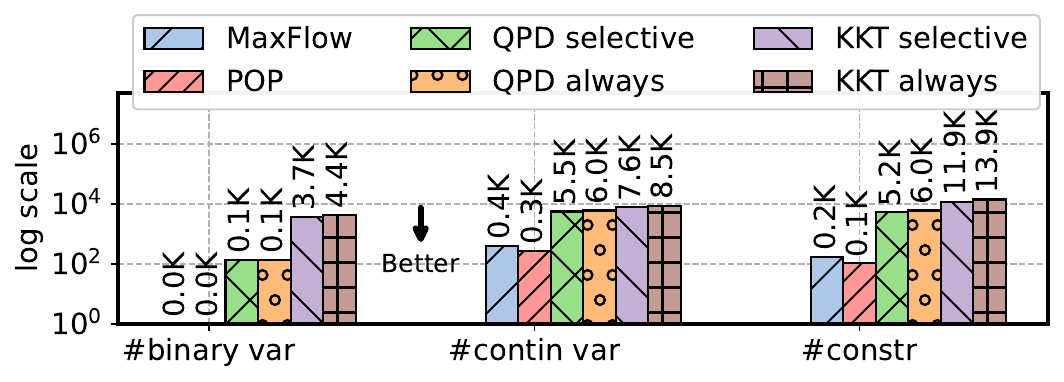}
	\caption{The complexity of user input of \pop and the subsequent rewrites in terms of the size of the optimization. \label{fig:complexity:pop}}
\end{figure}

\parab{Demand Pinning form for quantized demands.} \dempin has conditional or {\em if} clauses: if a demand is smaller than the threshold $T_d$, then route it over its shortest path; otherwise, use the optimal algorithm to route it. We reuse variables from the primal-dual rewrite to encode this:
{\small
	\begin{eqnarray}
		f_k^{\widehat{p}_k} \geq \sum_{q=1}^{Q} \indicator{L_{q} \leq T_d} L_{q}x^{k}_{q}, \,\,\, \forall k \in \mathcal{D}. \label{eqn:dempin_pdq}
	\end{eqnarray}
}
The $\{0, L_1, \dots,L_Q\}$ are the quantas and $x^k_q$'s are binary variables that pick which quanta is active for demand $k$ (for which $\widehat{p}_k$ is the shortest path).  The term $\indicator{L_{q} \leq T_d}$ is a no-op at runtime: it shows which terms exist in the sum. 

By definition of quantization: $d_k = \sum_q L_q x^k_q$. Thus, if the demand $d_k$ is no larger than the threshold $T_d$,~\eqnref{eqn:dempin_pdq} will ensure that its allocation on the shortest path equals the value of its demand.\footnote{When $d_k \leq T_d$,~\eqnref{eqn:dempin_pdq} effectively becomes $f_k^{\widehat{p_k}} \geq d_k$. In the converse case, \ie $d_k > T_d$, Eqn.~\ref{eqn:dempin_pdq} becomes $f_k^{\widehat{p}_k} \geq 0$, a no-op.}

We write this more directly and without quantized variables next. This is useful for the KKT rewrite where we don't do quantization or when users do not want to worry about rewrites but its big-$M$ based approach causes numerical instability.


\parab{Demand Pinning big-$M$ form.}
We can also encode \dempin using the standard big-$M$ approach from optimization theory~---~we need this for the KKT-encoding where we do not have quantized demands.

We describe the shortest path for demand $k$ using $\widehat{p}_k$ and write the {\em if} clause as:

{\small
	\begin{eqnarray}
		\sum_{p \in P_k,\ p \neq \widehat{p}_k} f_k^p \leq \max\left(M(d_k - T_d), 0\right), \,\,\, \forall k \in \mathcal{D}, \nonumber \\
		d_k - f_k^{\widehat{p}_k} \leq \max\left(M(d_k - T_d), 0\right), \,\,\, \forall k \in \mathcal{D}, \nonumber
	\end{eqnarray}
}
where $M$ is a large pre-specified constant.
Notice whenever the demand $d_k$ is below the threshold $T_d$ the constraints allocate zero flow on all but the (default) shortest path~---~\dempin routes the {\em full demand} on the default path in such cases. We can use standard optimization theory to convert the $\max$ in these constraints into a set of linear constraints~\cite{boyd_co} (the change also requires we modify the objective but does not impact the quality of the solution).

\parab{Partitioned Optimization Problems}. {\pop} is convex as it is the union of solutions to disjoint linear optimizations (\eqnref{eq:pop_max_flow}). It is hard to encode {\pop} as it uses random partitions, which makes ${\sf POP}(\mathcal{I})$ a random variable in the leader problem, but {\metaopt} needs a deterministic representation of the heuristic. 
Na\"ively, if we only consider a specific instance of each random variable (\eg for one instance of randomly assigning demands to partitions), it will overfit to that instance and not reflect {\pop}'s true behavior. 

We use the {\em expected value} or {\em tail percentile} of the gap from multiple random trials. To compute the average, we replace \heur in \eqnref{eq:metaopt_full_1} with its expected value and approximate the expectation through empirical averages over a few randomly generated partitions (see \secref{s:eval:te}). To find the tail, we use a sorting network~\cite{sorting_net_1,pretium} to compute the desired percentile across multiple random trials. 

In addition, we encode an advanced version of \pop in~\xref{s:appendix:pop} that splits large demands across multiple partitions instead of assigning each demand to one partition.

\subsection{{\pop} Client Splitting}
\label{s:appendix:pop}

In~\secref{s:background}, we introduce the (basic) \pop heuristic~\cite{pop}, which incorporates \emph{resource splitting} for our WAN TE problem, and in~\secref{ss:convexify:te_1}, we present \pop as a convex optimization.
The work in~\cite{pop} also specifies an extended full-fleshed version of \pop that incorporates ``\emph{client splitting}''. We next show how to express this variant as a convex optimization problem.

We can think of \pop client splitting as an operation that takes in a set of demands $\mathcal{D}$ and returns a modified set $\mathcal{D}_{\mathrm{cs}} = \mathrm{ClientSplit}(\mathcal{D})$ that can then be input into \pop as in~\eqref{eq:pop_max_flow}.
The function $\mathrm{ClientSplit}()$ generates several duplicates of the existing demands and reduces their volume in proportion. It performs several operations where it replaces $(s_k, t_k, d_k) \in \mathcal{D}$ with two elements of the form $(s_k, t_k, d_k/2)$. It iterates and repeats this operation until it terminates (see~\cite{pop}).


We encode a version of client splitting where we split an element in $\mathcal{D}$ if its demand value $d_k$ is larger than or equal to a threshold $d_{\mathrm{th}}$, and we keep splitting it until either we get to a predefined number of maximum splits (of the original demand\footnote{Notice that~\cite{pop} pre-specifies a total aggregated number of splits across all clients whereas we set the a maximum for per-client splits. This slight modification facilitates the convex representation of the heuristic.}) or the split demand is lower than $d_{\mathrm{th}}$.

Without loss of generality, we describe this idea for a single demand $d_1$: we can replicate this process for all demands in $\mathcal{D}$. Take for example the scenario where we split the demand at most twice (which creates at most $4$ virtual clients). We a-priori encode all the flow variables for all possible splits of $d_1$: we use seven variables of the form $f_{i,j}$ instead of the single $f_i$ where $f_{1_1}$ is the flow if we do not split the client; $f_{1,2}$ and $f_{1,3}$ are the flows if we split the client once and $f_{1,4}$ through $f_{1,7}$ are the flows if we split the client twice. These flows have to satisfy:
\begin{align*}
	0 & \leq f_{1,1} \leq d_1, \\
	0 & \leq f_{1,i} \leq \frac{d_1}{2}, \,\,\, \text{for } i \in \{2, 3\} \\
	0 &\leq f_{1,i} \leq \frac{d_1}{4},  \,\,\, \text{for } i \in \{4, 5, 6, 7\}. 
\end{align*}
The variable $f_{1,1}$ should be zero unless $d_1 < d_{\mathrm{th}}$ (when we do not split clients), which we can achieve using big-$M$ constraints (~\secref{ss:convexify:te_1}):
\begin{align*}
	f_{1,1} \leq \max( M (d_{\mathrm{th}}-d_1), 0).
\end{align*}
We want $f_{1,2}$ and $f_{1,3}$ to be exactly zero unless $d_1 \geq d_{\mathrm{th}}$ and $d_1/2 < d_{\mathrm{th}}$, which we can achieve by doing
\begin{align*}
	f_{1,i} & \leq \max( M (d_1-d_{\mathrm{th}}+\epsilon), 0),  && \,\,\, \text{for } i \in \{2, 3\}, \\
	f_{1,i} & \leq \max( M (d_{\mathrm{th}}-d_1/2), 0),  && \,\,\, \text{for } i \in \{2, 3\},
\end{align*}
where we added the small pre-specified $\epsilon >0$ to allow for the case where $d_1 = d_{\mathrm{th}}$.
Lastly, we want $f_{1,4}$ through $f_{1,7}$ to be exactly zero unless $d_1 \geq 2 d_{\mathrm{th}}$.
We encode this as:
\begin{align*}
	f_{1,i} & \leq \max( M (d_1-2d_{\mathrm{th}}+\epsilon), 0),  && \,\,\, \text{for } i \in \{4, 5, 6, 7\}.
\end{align*}
We can replicate this procedure for all $d_k$ and encode \pop with client splitting as a convex optimization problem. Once this is done, the techniques in~\secref{s:d:rewrites} apply.

\section{Details of Vector Bin Packing}
\label{a:s:vbp}

\subsection{Formulation of FFD (First Fit Decreasing)}
\label{a:s:vbp_new}
\label{ss:convexify:vbp_1}

\newcommand{\fit}{f}

We formulate the first-fit-decreasing heuristic as a feasiblity problem over a set of constraints.
That is, there will be no objective term. Such a formulation allows the two-level optimization problem \metaopt solves to become a one-shot optimization without a rewrite (because the heuristic follower has no objective, there can be no mismatch with the objective of the meta-optimization leader). 
To our knowledge, this formulation of FFD is novel. 

\parab{Modeling \ffd using \metaopt's helper functions.} \autoref{f:ffd-helper-appendix}, right show how users can easily model \ffd without having to go through the mathematical details. It also show the mapping between the helper functions and the pseudo code in \ffd using different colors.

\parab{Details of how we model \ffd as an optimization.}
\autoref{t:ffd_notation} lists our notation. 
The model uses integer (or binary) variables, is not a scalable method to solve FFD in practical systems, and we propose it only as an effective method to find adversarial inputs for the FFD heuristic.

\begin{table}[t!]
\begin{tabular}{ll}
{\bf Term} & {\bf Meaning}\\\hline
$i, j, d$ & {indexes for ball, bin and dimension}\\
${\bf Y}_i, {\bf C}_{j}$ & {Multi-dim vectors of ball and bin sizes}\\
$W_i$ & {Weight of ball $i$}\\
$\alpha_{ij}$ & {$=1$ if ball $i$ is assigned to bin $j$ and $0$ otherwise}\\
${\bf x}_{ij}$ & {Vector of resources allocated to ball $i$ in bin $j$}\\
${\fit}_{ij}$ & {$=1$ if ball $i$ can fit in bin $j$ and $0$ otherwise}\\\hline
\end{tabular}
\caption{\label{t:ffd_notation} The notation we use to formulate FFD as a feasibility problem. We use bold to indicate multi-dimensional vectors and capitalize variables which are typically constants for FFD but in the case of this work can be variables of the outer meta-optimization problem.}
\end{table}

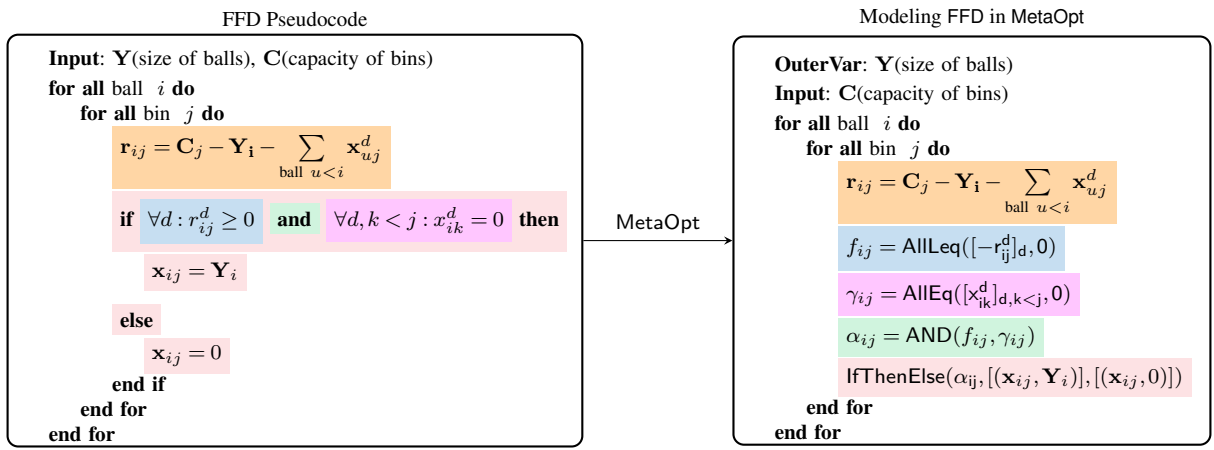
\begin{figure*}[t!]
	\begin{center}
		{\footnotesize
		\begin{tikzpicture}[->,>=stealth]
			\node[state, text width=7.5cm, label=FFD Pseudocode] (A)
			{
				\begin{algorithmic}[t]
					\Input : ${\bf Y}$(size of balls), ${\bf C}$(capacity of bins) \EndInput
					\ForAll{$\text{ball}~~i$}
					\ForAll{$\text{bin}~~j$}
					\State \colorbox{myo}{${\bf r}_{ij} = {\bf C}_j - {\bf Y_{i}} - \sum\limits_{\text{ball}~u < i}{{\bf x}_{uj}^d}$}
					\CIF{\colorbox{myb}{$\forall d:  r^d_{ij} \geq 0$} \colorbox{mygreen}{{\bf and}} \colorbox{mymagenta}{$\forall d, k < j: x^d_{ik} = 0$}}
					\State \colorbox{myp}{${\bf x}_{ij} = {\bf Y}_i$} \\
					\CELSE
					\State \colorbox{myp}{${\bf x}_{ij} = 0$}
					\EndIf
					
					\EndFor
					\EndFor
				\end{algorithmic}
			};
			
			\node[state,
			text width=6.2cm,
			right of=A,
			node distance=9.0cm,
			anchor=center, label={Modeling {\sf FFD} in {\sf MetaOpt}}] (B)
			{%
				\begin{algorithmic}[t]
					\OuterVar  : ${\bf Y}$(size of balls)  \EndOuterVar
					\Input : ${\bf C}$(capacity of bins) \EndInput
					\ForAll{$\text{ball}~~i$}
					\ForAll{$\text{bin}~~j$}
					\State \colorbox{myo}{${\bf r}_{ij} = {\bf C}_j - {\bf Y_{i}} - \sum\limits_{\text{ball}~u < i}{{\bf x}_{uj}^d}$}
					\State \colorbox{myb}{$f_{ij} = {\sf AllLeq([-r^d_{ij}]_d, 0)}$}
					\State \colorbox{mymagenta}{$\gamma_{ij} = {\sf AllEq([x^d_{ik}]_{d, k<j}, 0)}$}
					\State \colorbox{mygreen}{$\alpha_{ij} = {\sf AND}(f_{ij}, \gamma_{ij})$}
					\State \colorbox{myp}{${\sf IfThenElse}({\sf \alpha_{ij}}, [({\bf x}_{ij}, {\bf Y}_i)], [({\bf x}_{ij}, 0)])$}
					\EndFor
					\EndFor
				\end{algorithmic}
			};

			
			\begin{scope}[transform canvas={yshift=-0em}]
				\draw [->] (A) -- node[anchor=south,above] {${\sf MetaOpt}$} (B);
			\end{scope}
		\end{tikzpicture}
	}
	\end{center}
	\caption{The pseudo code for \ffd and how users can model it in \metaopt using the helper functions.\label{f:ffd-helper-appendix}}
\end{figure*}

\vspace{0.05in}
\noindent{\bf Modeling {\em decreasing} ball weights: }
Recall that in each iteration FFD assigns the ball with the largest weight among the unassigned balls.
We can model each iteration and use a sorting network~\cite{sorting_net_1} to ensure we assigns balls in decreasing order in each iteration. We propose a simpler alternative:

We observe the ball-weighting functions are a fixed function of the ball size: with {\sf{FFDSum}}~\cite{ffdsum-example}, {\sf{FFDProd}}~\cite{ffdprod-sandpiper}, {\sf{FFDDiv}}~\cite{ffddiv-ibm}, if ${\bf Y}$ is a multi-dimension vector and captures the size of the $i$'th ball on each dimension, then the weight of the $i$'th ball, $W_i$, is $\sum_d Y_i^d$, $\prod_d Y_i^d$ and $Y_i^1/ Y_i^2$ respectively.\footnote{The {\sf FFDDiv} function applies only on two dimensions.}

We constrain the input space (${\sf ConstrainedSet}$ in \eqnref{eq:metaopt_full_1}) to ensure that if we assign balls to bins based on their index we also assign them in decreasing order of weight:

{\small
	\begin{equation}
	W_{i} \geq W_{i + 1} \,\,\, \forall{\sf item}\,\, i \label{eq:ffd_weights}
	\end{equation}
}

\begin{table}[t!]
\begin{tabular}{l|cccccc}
& \multicolumn{6}{l}{\bf Bin index $j\longrightarrow$}\\\hline
Fit ${\fit}_{ij}$ 					& $0$ & $0$ 		& $1$ & $0$ 		& $1$ 		& $1$\\
RHS of Eqn.~\ref{eq:ffd_alpha} 	& $0$ & $\frac{1}{2}$ 	& $1$ & $\frac{2}{4}$	& $\frac{4}{5}$	& $\frac{4}{6}$\\
$\alpha_{ij}$ 				& $0$ & $0$			& $1$ & $0$			& $0$			& $0$\\
\end{tabular}
\caption{Illustrating how we model {\em first-fit} using Eqn.~\ref{eq:ffd_alpha} \label{t:alpha_constraint}}
\end{table}

\vspace{0.05in}
\noindent{\bf Modeling {\em first-fit}:}  FFD assigns each ball to the first bin that it can fit in. 
Let bins be ordered in index order and $\alpha_{ij}$ be an integer variable whose value is $1$ iff the first bin (i.e., the one with the smallest index) that ball $i$ fits in is bin $j$.
We model the first-fit constraint as:
{\small 
	\begin{alignat}{3}
	\alpha_{ij} & \leq \frac{{\fit}_{ij} + \sum_{\mbox{bin}\ k < j}(1- {\fit}_{ik})}{j} & ~~~\forall{\sf item} i,\ \forall{\sf bin}\ j \label{eq:ffd_alpha}\\
	\sum\limits_{\mbox{bin} \ j}\alpha_{ij} & = 1 & ~~~\forall{\sf item}\,{i}
	\end{alignat}
}
\autoref{t:alpha_constraint} shows an example that illustrates this constraint in action. 
It is easy to prove that the right-hand-side of Eqn.~\ref{eq:ffd_alpha} is $1$ for the first bin where the ball can fit (i.e., smallest index in set of bins $\{j \|\ \mbox{fit}\ {\fit}_{ij} = 1\}$) and less than $1$ for all other bins. The second constraint is necessary to ensure that $\alpha_{ij} = 1$ for exactly the first-fitting bin for each ball.

\vspace{0.05in}
\noindent{\bf Modeling resource allocation and capacity constraints:} We first make sure we allocate resources in a way that is consistent with ball assignment: we should allocate sufficient resources to a ball from the bin we assigned it to and ensure we do not assign it resources from any other bin. We can do this simply by: ${\bf x}_{ij} \triangleq {\bf Y}_i \alpha_{ij}$. Here, the resource assigned to a ball $i$ at bin $j$, ${\bf x}_{ij}$ is simply the product of the ball size vector ${\bf Y}$ with the assignment indicator variable $\alpha$. But ${\bf Y}$ is also a variable in the outer (or leader) problem and so such equations are non-linear. To linearize these equations, we use a technique similar to what is colloquially known as big-M in optimization literature. Let $Z$ be an appropriately large postive integer constant, then:
{\small
	\begin{alignat}{3}
	x^{d}_{ij} & \leq Z \alpha_{ij} & ~~~\forall {\sf item}\, i \,\,\forall {\sf bin}\, j \,\,\forall {\sf dim}\, {d}\\
	\sum_{j} x_{ij}^d & = Y^{d}_{i} & ~~~ \forall{\sf item}\, i \,\,\forall {\sf dim}\, {d}
	\end{alignat}
}

We define the residual capacity of bin $j$ when we allocate ball $i$ to it as (remember we allocate ball $i$ only after we allocate all balls with a lower index):
{\small
	\begin{equation}
		r_{ij} \triangleq {\bf C}_j - {\bf Y_i} - \sum_{\mbox{balls}\ u < i}{\bf x}_{uj}~~~\forall {\sf item}\, i \,\,\forall {\sf bin}\, j \label{eq:ffd_residual}
	\end{equation}
}
The sum on the right captures how much resources we have already allocated to other balls from this bin. We define fit variables ${\fit}$ to enforce capacity constraints. We next ensure $\fit_{ij}$ is $1$ iff the bin $j$ has adequate resources to fit ball $i$. To do so, let $M$ be some appropriately large constant positive integer, we use: 
{\small
	\begin{equation}
	\min_d r_{ij}^d \leq M \fit_{ij} \leq M + \min_d r_{ij}^d, ~~~\forall {\sf item}\, i \,\,\forall {\sf bin}\, j\,\,\forall {\sf dim}\, {d} \label{eq:ffd_fit}
	\end{equation}
}
Here, if the ball $i$ fits in bin $j$ then the residual capacity $r_{ij}^d$ is greater than $0$ across all dimensions $d$ and Eqn.~\ref{eq:ffd_fit} clamps $M{\fit}$ between a positive number and $M$ plus that positive number (remember $r_{ij}$ is the capacity of bin $j$ that remains {\em after} we assign ball $i$ to it).\footnote{Corner case: when the residual capacity is precisely $0$ on all dimensions, we want $\fit$ to still be $1$ but these constraints will allow $\fit$ to be $0$. This is a rare case but it can occur in practice; we can solve this corner case in a few different ways including for example adding a small value $\epsilon$ to the left-most term in equation~\ref{eq:ffd_fit}.} 
Since ${\fit}$ is a binary variable, the only feasible assignment in this case is $1$.
Conversely, if the ball does not fit in a bin, the residual capacity $r_{ij}^d$ is below $0$ on at least one dimension $d$ and the constraint in Eqn.~\ref{eq:ffd_fit} clamps $M{\fit}$ to be between a negative number and $M$ plus that negative number which forces $\fit$ to be $0$.
In practice we set the value of $M$ to be larger than the largest single-dimension bin capacity~($i.e., \max_{j, d} C_j^d$).

\vspace{0.05in}
\noindent{\bf Unique solution for FFD: }The equations~\ref{eq:ffd_alpha}--\ref{eq:ffd_fit} uniquely specify a solution to the iterative first-fit-decreasing heuristic. These constraints are linear even if the ball and bin sizes are variables in the outer problem. This is key because {\metaopt} can apply without having to rewrite the heuristic follower.

\vspace{0.05in}
\noindent{\bf Counting the number of bins used:}
To find an adversarial gap, the outer (leader) problem needs the number of bins used by {\ffd} since, in this case, {\metaopt} seeks inputs which cause the heuristic to use more bins than the optimal: 
{\small
\begin{equation}
\mbox{Num. bins used by {\ffd}} \triangleq \sum_{\mbox{bin}\ j} \max\limits_{\mbox{ball}\ i} \alpha_{ij}.
\end{equation}
} 
The term simply counts bins to which at least one ball is assigned.
Clearly, this term is linear ($\max$ has a linear rewrite) and does not give rise to any additional concerns.

\subsection{Proof of \theoremref{th:ffdsum:approx-ratio}}
\label{apendix:proof:ffdsum}

Our goal is to show that for any $k > 1$, an input $\mathcal{I}$ exists where ${\sf FFDSum}(\mathcal{I})$ needs at least $2k$ bins while ${\sf OPT}(\mathcal{I})$ only needs $k$. 
Since we are proving a lower bound on the approximation ratio of ${\sf FFDSum}$, it suffices to show an example for each $k$. 
We do this by the following; for every value of $k > 1$, we can find $m$ and $p$ such that $k = 2m + 3p$ and $p \in \{0, 1\}$. 
Then, we create an example consisting of $6m + 9p$ balls where ${\sf FFDSum}(\mathcal{I}) = 2{\sf OPT}(\mathcal{I}) = 2k$. We show the constructed example in \tabref{tab:ffd-2d:proof-example} along with the allocation from ${\sf OPT}$ and ${\sf FFDSum}$.

\begin{table}[t!]
	\centering
	\begin{tabular}{|c|c|c|c|c|c|}
		\hline
		Ball & Ball  	& 	Wei 	& 	Num	&	Bin ID  	& Bin ID \\ 
		ID		&	Size	&	ght		&		&	({\sf OPT})		& (\ffd) \\	
		\hline \hline
		1	&	[0.92, 0.00]	&	0.92	&	$m$	& 	B1	&	B1	\\ \hline
		2	&	[0.91, 0.01]	&	0.92	&	$m$	&	B2	&	B2	\\ \hline
		3	&	[0.48, 0.20]	&	0.68	&	\rdelim\}{4}{*}[${}\times p$]	&	C1	&	C1	\\ \cline{1-3}\cline{5-6}
		4	& 	[0.68, 0.00]	&	0.68	&		&	C2	&	C2	\\ \cline{1-3}\cline{5-6}
		5	& 	[0.52, 0.12]	&	0.64	&		&	C3	&	C1	\\ \cline{1-3}\cline{5-6}
		6	& 	[0.32, 0.32]	&	0.64	&		&	C3	&	C2	\\ \hline
		7	& 	[0.19, 0.45]	&	0.64	&	\rdelim\}{2}{*}[${}\times p$]	&	C2	&	C3	\\ \cline{1-3}\cline{5-6}
		8	& 	[0.42, 0.22]	&	0.64	& 		&	C1	&	C3	\\ \hline
		9	& 	[0.10, 0.54]	&	0.64	&	$p$	&	C1	&	C4	\\ \hline
		10	& 	[0.10, 0.54]	&	0.64	&	$p$	&	C2	&	C5	\\ \hline
		11	& 	[0.10, 0.53]	&	0.63	&	$p$	&	C3	&	C6	\\ \hline
		12	& 	[0.06, 0.48]	&	0.54	&	$m$	&	B2	&	B1	\\ \hline
		13	& 	[0.07, 0.47]	&	0.54	&	$m$	&	B1	&	B2	\\ \hline
		14	& 	[0.01, 0.53]	&	0.54	&	$m$	&	B1	&	B3	\\ \hline
		15	& 	[0.03, 0.51]	&	0.54	&	$m$	&	B2	&	B4	\\ \hline
	\end{tabular}
	\caption{Constructed example to prove the approximation ratio of 2d-${\sf FFDSum}$ is always lower bounded by 2 for any value of ${\sf OPT}(\mathcal{I})>1$. 
		The notation $^A_B\}\times p$ represents a sequence of $2p$ balls  consisting of ball type $A$ and $B$ where one occurrence of $A$ is always followed by one  of $B$. (\eg for $p=2$, this sequence is equivalent to $ABAB$). \label{tab:ffd-2d:proof-example}}
\end{table}
\section{Details of Packet Scheduling Heuristics}
\label{s:a:packet-scheduling}
Here, we describe how we model SP-PIFO~\cite{sp-pifo} as a feasibility problem. \aifo is similar. These formulations are to the best of our knowledge novel. \autoref{t:ps_notation} lists our general notations and \autoref{t:sp_pifo_notation} and \autoref{t:aifo_notation} show our specific notations for SP-PIFO and AIFO respectively.

\parab{Definition (Ranks and Priorities)}. Packet scheduling papers~\cite{sp-pifo,aifo} use both ranks and priorities: a packet with a higher rank has a lower priority and vice-versa. If a packet has rank $R_p$, and $R_{max}$ is the maximum possible rank, we can compute the priority of the packet by $R_{max} - R_p$ which ensure all the packets with rank lower than $R_i$ have higher priority value and all the packets with rank higher than $R_i$ have lower priority value.

\begin{table}[t!]
	\begin{tabular}{ll}
		{\bf Term} & {\bf Meaning}\\\hline
		$P, p$ & {number of packets and index for packet}\\
		$R_{max}, R_p$ & {maximum rank and rank of packet $p$}\\
		$w_p$ & {Weight of packet $p$}\\
		$d_{pj}$ & {$1$ if packet $p$ dequeued after $j$, o.w. $=0$}\\\hline
	\end{tabular}
	\caption{\label{t:ps_notation} Notation used to formulating packet scheduling heuristics as a feasibility problem. We capitalize variables which are typically constants for heuristic but in the case of this work can be variables of the outer meta-optimization problem.}
\end{table}

\subsection{Formulation of SP-PIFO}
\label{a:s:sp-pifo}

\begin{figure*}[t!]
	\begin{center}
		{
			\footnotesize
		\begin{tikzpicture}[->,>=stealth]
			\node[state, text width=6cm, label=SP-PIFO Pseudocode] (A)
			{
				\begin{algorithmic}[t]
					\Input : incoming packets ${\bf p}$ with rank $R_p$ \EndInput
					\Input : number of queues $N$ \EndInput
						\State ${\bf l}^0 = {\bf 0}$
						\ForAll{$\text{packet}~p$}
						\CIF{\colorbox{myb}{$R_p < l^{p-1}_N$}}
						\State \colorbox{myp}{${\bf \hat{l}}^p = {\bf l}^{p-1} + (R_p - l^{p-1}_N)$}
						\EndIf
						\For{$\text{queue}~q \in \{1,\dots,N\}$}
						\CIFG{\colorbox{mymagenta}{$\hat{l}^p_q \leq R_p < \hat{l}^p_{q-1}$}}
						\State \colorbox{mygreen}{$x^p_{q} = 1$}
						\State \colorbox{mygreen}{$l^{p}_q = R_p$}
						\CELSEG
						\State \colorbox{mygreen}{ $x^p_{q} = 0$}
						\EndIf
						\EndFor
						\EndFor
				\end{algorithmic}
			};
			
			\node[state,
			text width=7.5cm,
			right of=A,
			node distance=9.0cm,
			anchor=center, label={Modeling {\sf SP-PIFO} in {\sf MetaOpt}}] (B)
			{%
				\begin{algorithmic}[t]
						\OuterVar  : incoming packets ${\bf p}$ with rank $R_p$ \EndOuterVar
						\Input number of queues $N$ \EndInput
						\State ${\bf l}^0 = {\bf 0}$
						\ForAll{$\text{packets}~p$}
						\State \colorbox{myb}{$\alpha_p = {\sf IsLeq}(R_p - l_N^{p-1} - \epsilon, 0)$}
						\State \colorbox{myp}{${\sf IfThen}(\alpha_p, [({\bf \hat{l}}^p, {\bf l}^{p - 1} + R_p - l_N^{p-1} )])$}
						\ForAll{$\text{queue}~q \in \{1,\dots,N\}$}
						\State \colorbox{mymagenta}{$u_{pq} = {\sf IsLeq}(R_p - \hat{l}^p_{q-1} - \epsilon, 0)$}
						\State \colorbox{mymagenta}{$l_{pq} = {\sf IsLeq}(\hat{l}^p_{q} - R_p, 0)$}
						\State \colorbox{mymagenta}{$f_{pq} = {\sf AND}(u_p, l_p)$}
						\State \colorbox{mygreen}{${\sf IfThenElse}(f_{pq}, [(x^p_q, 1), (l_q^p, R_p)], [(x^p_q, 0)])$}
						\EndFor
						\EndFor
				\end{algorithmic}
			};

			
			\begin{scope}[transform canvas={yshift=-0em}]
				\draw [->] (A) -- node[anchor=south,above] {${\sf MetaOpt}$} (B);
			\end{scope}
		\end{tikzpicture}
	}
	\end{center}
	\caption{The pseudo code for SP-PIFO and how users can model it in \metaopt using the helper functions.\label{f:sp-pifo-helper}}
\end{figure*}
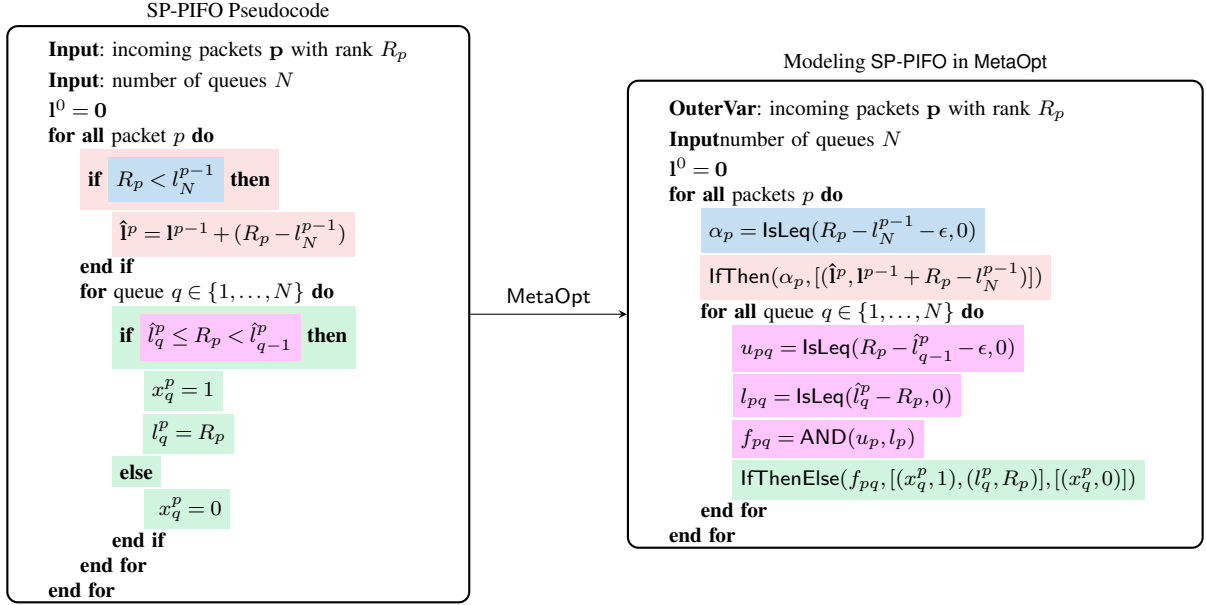

SP-PIFO approximates PIFO~\cite{pifo} using $n$ strict priority FIFO queues. It keeps a packet rank for each queue (\ie queue rank) that shows the lower bound on packet ranks the queue admits. For each packet, it starts from the lowest priority queue until it finds the first queue that can admit the packet (packet rank $\geq$ queue rank). If a queue admits the packet, SP-PIFO adds the packet to the queue and updates the queue rank to the recently admitted packet's rank (\ie push up). If none of the queues admit the packet (packet rank $<$ highest-priority queue rank), it reduces the rank of all the queues such that the highest-priority queue can admit the packet (\ie push down). \autoref{f:sp-pifo-helper}, left shows the pseudo code for SP-PIFO.

\parab{Modeling SP-PIFO using \metaopt's helper functions.} \autoref{f:sp-pifo-helper}, right show how users can easily model SP-PIFO without having to go through the mathematical details. It also show the mapping between the helper functions and the pseudo code in SP-PIFO using different colors.

\parab{Modeling push down.}  SP-PIFO reduces the rank of all the queues if none of the queues can admit the packet. This happens when the rank of the highest priority queue is higher than the packet rank ($R_p$). We model this as:
{\small
	\begin{equation}
		\hat{l}^p_q = l^{p-1}_q + max(0, l^{p-1}_N - R_p)
	\end{equation}
}

This constraint keeps the queue ranks the same if the packet rank $R_p$ is greater than the highest priority queue $l^{p-1}_N$. Otherwise, it applies push down and reduces the rank of all the queues so that the highest priority queue can admit the packet (after the update, the rank of highest priority queue $N$ is the same as packet rank $R_p$).
%
%
%

\parab{Deciding on the proper queue.} Recall SP-PIFO adds a packet to the queue with the lowest priority among the ones that can admit the packet; that is the $q$ such that $R_p \geq \hat{l}^p_q$ (admits the bin) and the one lower priority queue $q-1$ does not admit the packet ($R_p < \hat{l}^p_{q - 1}$). We model this as following:
{\small
	\begin{alignat}{3}
		Mx_{pq} &\leq M + R_p - \hat{l}^p_q & ~~~\forall{\sf packet}\, p\,\, \forall{\sf queue}\, q\label{eq:which-queue-1} \\
		Mx_{pq} &\leq M + \hat{l}^p_{q - 1} - R_p - \epsilon & ~~~\forall{\sf packet}\, p\,\, \forall{\sf queue}\, q\label{eq:which-queue-2} \\
		\sum_{q} x_{pq} & = 1 & ~~~\forall{\sf packet}\,p\label{eq:which-queue-3} 
	\end{alignat}
}
where $M$ is a large constant ($\geq R_{max}$) and $\epsilon$ is a small constant ($<1$). The first constraint ensures a queue with rank greater than the rank of the packet does not get the packet (if $R_p < \hat{l}^p_q$, the constraint forces $x_{pq}$ to 0). The second constraint ensures a queue does not get the packet if a lower priority queue admits the packet (if $R_p \geq \hat{l}^p_{q - 1}$, the constraint forces $x_{pq}$ to). The last constraint forces the optimization to place the packet in one of the queues.

\begin{table}[t!]
	\begin{tabular}{ll}
		{\bf Term} & {\bf Meaning}\\\hline
		$N, q$ & {number of queues, and index for queue}\\
		$B$ & {burst factor in \aifo} \\
		${\bf l}^{p-1}$ & {vector of queue ranks when deciding for packet $p$}\\
		${\bf \hat{l}}^{p}$ & {vector of queue ranks after push down for packet $p$} \\
		$x_{pq}$ & {=1 if packet p in queue q, o.w. =0} \\\hline
	\end{tabular}
	\caption{\label{t:sp_pifo_notation} Additional Notations for SP-PIFO.}
\end{table}

\parab{Modeling push up.} Recall SP-PIFO updates the rank of the queue to the most recently admitted packet's rank. We model this as:
{\small
\begin{equation}
	l^{p}_q = \hat{l}^p_q + x_{pq} (R_p - \hat{l}^p_q)\,\,\,\forall{\sf packet}\,p\,\,\,\forall{\sf queue}\,q \label{eq:push-up}
\end{equation}
}

This constraint only updates the rank of queue $q$ to $R_p$ if the packet is place in the queue ($x_{pq} = 1$). We can linearize this constraint~\cite{boyd_co}.

\parab{Unique solution for SP-PIFO.} We can combine these constraints to uniquely specify SP-PIFO's decisions on a sequence of incoming packets. All these constraints are linear or linearizable using standard techniques even though packet ranks are variables in the outer problem. 

\parab{Computing weighted average delay.} We measure the gap in terms of the average delay of forwarding packets weighted by their priority (inverse of their rank). To measure delay of a packet, we count how many packets SP-PIFO decides to dequeue before it. Let $d_{pj}$ indicate whether packet $p$ is dequeued after packet $j$, we model the weighted average delay as:
{\small
\begin{equation}
	\text{Weighted avg delay} = \frac{1}{N} \sum_{{\sf pkt}\, p, j \neq p} (R_{max} - R_p) {d_{pj}} \label{eq:sp-pifo-cost}
\end{equation}
}

Next, we define $d_{pj}$: we first assign weights to the packets such that the weights respect the order in which the packets should be dequeued (a packet $j$ has a higher weight than $p$ if it should be dequeued before $j$). We assign weights $w_p$ as:

{\small
\begin{equation}
	w_p = -p + \sum_{{\sf queue}\, q}{qPx_{pq}}
\end{equation}
}

This weighting guarantees that (1) a packet from a higher priority queue always has a higher weight than a packet from a lower priority queue, and (2) among the packets in the same priority queue, the one arrived earlier has a higher weight. Packet $p$ is dequeued after packet $j$ if the weight of packet $j$ is higher.
{\small
	\begin{equation}
	 	w_j - w_p \leq Md_{pj} \leq M + w_j - w_p\,\,\,\forall{\sf packets}\,p,j
	\end{equation}
}
Note that weights are unique.

\subsection{Formulation of AIFO}
\label{a:s:aifo}
\aifo~\cite{aifo} is an admission control on top of a FIFO queue that tries to approximate the same set of packets a \pifo queue would admit and is specifically designed for shallow buffers. AIFO keeps a window of recently seen packet ranks and computes the relative rank of the new packet with respect to this window. Then, it compares this quantile estimate with the fraction of available space in the queue (multiplied by some constant burst factor). If the quantile is lower or equal, \aifo admits the packet. Otherwise, it drops the packet.

\parab{Finding quantile estimate.} Recall that \aifo computes how many packets in its recent window have lower ranks than an incomming packet $p$, we model this as:

{\small
\begin{alignat}{2}
	R_p - R_j \leq M g_{pj} \leq M + R_p - R_j - \epsilon & \label{eq:quantile} \\ 
	& \hspace{-30mm}\forall{\sf packet}\,p\,\forall{\sf packet}\, j:\,\,\, p - K \leq j \leq p - 1\nonumber \\
	g_{p} = \sum_{{\sf pkt}~j: p - K \leq j \leq p}{g_{pj}} \label{eq:total_quantile}
\end{alignat}
}

\noindent where $M$ is a large constant ($M \geq R_{max}$) and $\epsilon$ is a small constant ($\epsilon < 1$). Observe that the first constraint compares the current packet with the last $K$ packets (the ones in the window). For every pair, if $R_p > R_j$, the left constrain in \autoref{eq:quantile} forces $Mg_{pj}$ to be positive and consequently $g_{pj}=1$ (packet $j$ in the window has a lower rank). If $R_p \leq R_j$, the right constraint forces $g_{pj}$ to be 0. \autoref{eq:total_quantile} keeps track of number of packets in the window with rank less than the rank of packet $p$. For packets $p < K$, we add some additional variables that represent the rank of packets arrived and departed before this sequence.

\parab{Deciding to admit or drop.} Recall that \aifo admits the packet if the quantile estimate of the current packet rank is less than the some factor of the available capacity of the queue. We model this as:

{\small
\begin{alignat}{2}
	\hat{c}_p = B\frac{C - \sum_{{\sf pkt}\,\, i < p}{a_i}}{C}& ~~~ \forall{\sf packet}\,p \\
	\hat{c}_p - g_p + \epsilon \leq Za_{p} \leq Z + \hat{c}_p - g_p & ~~~ \forall{\sf packet}\,p
\end{alignat}
}

\noindent where $Z$ is a large constant ($\geq$ maximum of window size and queue size). The first constraint computes the fraction of available capacity multipled by a constant (burst factor in \cite{aifo}) and the second constraint ensure $a_p = 0$ (\ie we drop the packet) if the quantile estimate is higher than the available capacity metric $\hat{c}_p$ and $a_p = 1$ otherwise.

\begin{table}[t!]
	\begin{tabular}{lm{7cm}}
		{\bf Term} & {\bf Meaning} \\\hline
		$C$ & {queue size in the number of packets}\\ \hline
		$a_{p}$ & {$1$ if packet $p$ is admitted, $=0$ if dropped} \\ \hline
		$K$ & number of samples in the window to estimate the quantile\\ \hline
		$g_p$ & {number of packet ranks in the window smaller than rank of packet $p$} \\ \hline
		$g_{pj}$ & {$1$ if rank of packet $j$ in the window is smaller than rank of packet $p$}\\\hline
	\end{tabular}
	\caption{\label{t:aifo_notation} Additional Notations for \aifo.}
\end{table}

Computing the final ordering of packet is similar to \sppifo. These constrains in combination find {\aifo}'s unique solution. All the constraints are also linear. 

\subsection{Proof of \theoremref{th:sp-pifo:avg-gap}}
\label{a:s:sp-pifo-theorem}

\begin{figure}[t]
	\includegraphics{./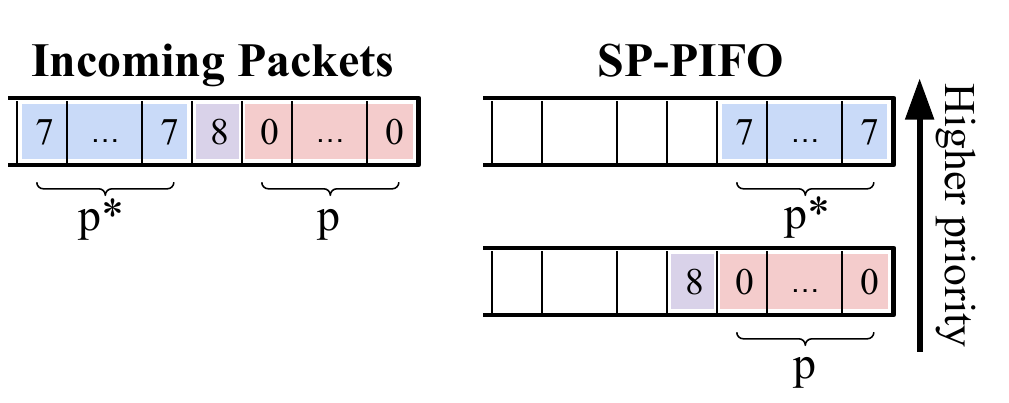}
	\caption{Example of the input trace for \sppifo when $R_{max} = 8$. In this case, \sppifo dequeues all the packets with lowest rank (highest priority) after all the second lowest priority packets ($r=R_{max} - 1$). Packets arrived earlier are on the right size of the queue.\label{f:sp-pifo:theorem}}
\end{figure}

We use the same approach as our proof for \ffdsum. We show a constructive example that matches the gap. In our example, first p packets with the lowest rank (=0) arrive, then 1 packet with the highest possible rank (=$R_{max}$), and finally, $p^{*}$ packets with the second highest possible rank (=$R_{max} - 1$). 

Given this sequence of packets; \sppifo first adds all the $p$ packets with the lowest rank to the lowest priority queue and then, updates the queue rank to 0 (\ie push up). Then, it adds the highest rank packet to the lowest priority queue and updates the queue rank to $R_{max}$. Lowest priority queue can not admit the packets with rank=$R_{max} - 1$ anymore because the condition to admit a packet is that the queue rank should be lower than the packet rank. So, all the $p^*$ packets are enqueued in a higher priority queue. As a result, all these $p^*$ packets are going to be forwarded before the $p$ packets with highest priority (\autoref{f:sp-pifo:theorem} shows this using an example). 

We can compute the weighted sum of packet delays for \sppifo and \pifo is as follows:

{\small
\begin{alignat}{2}
	{\sf Wdelay}_{\pifo} & = \frac{R_{max}p(p-1)}{2} + pp^* + \frac{p^*(p^* - 1)}{2}\\
	{\sf Wdelay}_{\sppifo} & = \frac{p^{*}(p^{*} - 1)}{2} + R_{max}pp^* + \frac{R_{max}p(p - 1)}{2}&
\end{alignat}
}

We can compute the difference in the weighted sum of delays as:

{\small
\begin{alignat}{2}
	{\sf Wdelay}_{\sppifo} - {\sf Wdelay}_{\pifo} & = (R_{max} - 1)pp^{*} \label{eq:28}\\
		& = (R_{max} - 1)(N - 1 - p)p \nonumber
\end{alignat}
}

Note that $p + p^* = N - 1$. We can derive \theoremref{th:sp-pifo:avg-gap} by finding the maximum of \autoref{eq:28}.

\section{List of \metaopt Helper Function}
\label{s:a:helper}
\autoref{t:helper:list} lists the helper functions in \metaopt. \metaopt internally and automatically translates these into constraints. For specific usecases, please refer to \autoref{f:dp-helper-appendix} for \dempin, \autoref{f:ffd-helper-appendix} for \ffd, and \autoref{f:sp-pifo-helper} for \sppifo.

\begin{table*}[h]
	\centering
	\renewcommand{\arraystretch}{1.5}
		\begin{tabular}{lm{11cm}}
			Helper Function & Description \\ \hline
			${\sf IfThen(b, [(x_i, F_i())])}$ & if binary variable $b=1$ then $x_i = F_i()$ for all $i$.
			\\
			${\sf IfThenElse(b, [x_i, F_i()], [(y_j, G_j())])}$ & if binary variable $b=1$ then $x_i = F_i()$ for all $i$, otherwise $y_j = G_j()$ for all j. \\
			${\sf b = AllLeq([x_i], A)}$ & $b = 1$ if all {$x_i$}s are $\leq$ a constant $A$, otherwise $b=0$. \\
			${\sf b = IsLeq(x, y)}$ & $b = 1$ if $x \leq y$, otherwise $b = 0$. \\
			${\sf b = AllEq([x_i], A)}$ & $b=1$ if all ${x_i}$s are $=$ a constant $A$, otherwise $b = 0$. \\
			${\sf b = AND([u_i])}$ & $b=1$ if all ${u_i}s$ are $=1$, otherwise $b = 0$. \\
			${\sf b = OR([u_i])}$ & $b=1$ if at least one $u_i = 1$, otherwise $b = 0$. \\
			${\sf y = Multiplication(u, x)}$ & Linearizes multiplication of a binary variable $b$ and a  continuous variable $x$. \newline (Internally, we choose a simpler encoding if x is non-negative) \\
			${\sf y = MAX([x_i], A)}$ & $y =$ max of ${x_i}$s and a constant $A$. \\
			${\sf y = MIN([x_i], A)}$ & $y =$ min of ${x_j}$s and a constant $A$. \\
			${\sf [b_i] = FindLargestValue([x_i], [u_i])}$ & $b_i = 1$ if $x_i$ is the largest among the group of items $x_j$ with $u_j = 1$, otherwise $b_i = 0$. At least one of one $b_i$ is $=1$.\\
			${\sf [b_i] = FindSmallestValue([x_i], [u_i])}$ & $b_i = 1$ if $x_i$ is the smallest among the group of items $x_j$ with $u_j = 1$, otherwise $b_i = 0$. At least one of one $b_i$ is $=1$.\\
			${\sf r = Rank(y, [x_i])}$ & $r = $ rank of variable $y$ among the group of variables $[x_i]$ (quantile). \\
			${\sf ForceToZeroIfLeq(v, x, y)}$ & Forces $v=0$ if $x \leq y$ (users can model this with ${\sf IfThen}$, but this one is customized and faster). Internally, we choose a simpler encoding if v is binary.
			\\ \hline
		\end{tabular}
	\caption{{\metaopt}'s helper functions.\label{t:helper:list}}
\end{table*}

\section{Black-box search methods}
\label{appendix:black-box}

We next describe our baselines in more detail. We compared \metaopt to these baselines in~\S\ref{s:eval}.

\parab{Random search.} This strawman solution picks random inputs, computes the gap, and returns the input that resulted in the maximum gap after .

\parab{Hill climbing} is a simple local search algorithm. It first randomly chooses an arbitrary input $\mathbf{d}_0$ and then generates its neighbors ($\mathbf{d}_{aux}$): it adds to $\mathbf{d}_0$ a value, which it draws from a zero-mean $\sigma^2$-variance Gaussian distribution. If this neighboring input increases the gap the hill climber moves to it. Otherwise it draws another neighbor. 

The hill climber repeats these steps until it fails to make progress and terminates. This happens when it fails to increase the gap for $K$ consecutive iterations. The hill climber outputs its current solution as a local maximum once it terminates (Algorithm~\ref{alg:hc}).

We re-run the hill climber $M_\mathrm{hc}$ times with different initial inputs and return the solution that produces the maximum gap to minimize the impact of the starting point.


\begin{algorithm}[t]
	\caption{Hill climbing}\label{alg:hc}
	\begin{algorithmic}
		\begin{footnotesize}
			\Input : $\mathbf{d}_0$, $\sigma^2$, $K$
			\EndInput
			\State $\mathbf{d} \gets \mathbf{d}_0$, $\,\,\,k \gets 0$ 
			\While{$k < K$}
			\State $\mathbf{d}_{\mathrm{aux}} \gets \max(\mathbf{d} + \mathbf{z}, \mathbf{0})\,\,\,\,$ where $\,\,\,\,\mathbf{z} \sim \mathcal{N} (\mathbf{0}, \sigma^2 \mathbf{I})$
			\IIf{$\mathrm{gap}(\mathbf{d}_{\mathrm{aux}})>\mathrm{gap}(\mathbf{d})$} $\mathbf{d} \gets \mathbf{d}_{\mathrm{aux}}$, $\,\,\,k \gets -1$
			\EndIIf
			\State $k \gets k+1$			
			\EndWhile
			\Output : $\mathbf{d}$
			\EndOutput
		\end{footnotesize}
	\end{algorithmic}
	\vspace{-1mm}
\end{algorithm}

\parab{Simulated annealing} refines hill-climbing and seeks to avoid getting trapped in a local maxima~\cite{simulated_annealing}. The difference between the two algorithms is simulated annealing may still (with some probability) move to a neighboring input even if that input does not improve the gap.


Simulated annealing gradually decreases the probability of moving to inputs that do not change the gap: it defines a temperature term, $t_p$, which it decreases every $K_p$ iterations to $t_{p+1} = \gamma t_{p}$. Here, $0<\gamma <1$ which ensures $t_p \to 0$. If~$\mathrm{gap}(\mathbf{d}_{\mathrm{aux}}) \leq \mathrm{gap}(\mathbf{d})$, we have $\mathbf{d} \gets \mathbf{d}_{\mathrm{aux}}$ with probability $\exp(\frac{\mathrm{gap}(\mathbf{d}_{\mathrm{aux}}) - \mathrm{gap}(\mathbf{d})}{t_p})$.
%
%
We repeat the process $M_{\mathrm{sa}}$ times and return the best solution. 

\parab{Hill climbing vs simulated annealing.} Hill climbing has less parameters and is better suited for smooth optimizations where there are few local-optima. But simulated annealing is better suited for intricate non-convex optimizations with many local-optima because its exploration phase, although slower, allows it to avoid local optima and works better in the long run.

Both of these algorithms have a number of hyperparameters: we run grid-search to find the ones that produce the highest gap.

\end{document}